\documentclass{article}
\usepackage{graphicx} % Required for inserting images
\usepackage{amsthm}
\usepackage{amsmath}
\usepackage{amssymb}
\usepackage{comment}
\usepackage{algorithm}
\usepackage{multirow}
\usepackage{authblk}

\title{Establishing an $\Omega(\sqrt{d})$ complexity lower bound for PDMP samplers and how to break it: a sub-$\sqrt{d}$ algorithm for Gaussian-tailed targets}
\author{Augustin Chevallier}
\affil{Université de Strasbourg \\ a.chevallier@unistra.fr}
\newtheorem{proposition}{Proposition}
\newtheorem{lemma}{Lemma}
\newtheorem{assumption}{Assumption}
\newtheorem{theorem}{Theorem}

\newtheorem{conjecture}{Conjecture}
\newtheorem{remark}{Remark}
\newtheorem{definition}{Definition}

\begin{document}

\maketitle

\begin{abstract}
Despite the theoretical appeal of their non-reversibility, to date, no Piecewise Deterministic Markov Process (PDMP) samplers have been developed that scale better than $\mathcal{O}(\sqrt{d})$ in computational complexity with respect to the target dimension $d$. We prove that this is a fundamental limitation by establishing an $\Omega(\sqrt{d})$ lower bound on the algorithmic complexity of PDMP samplers in a standard setup. By relaxing the assumption that the target density must remain invariant at all continuous times, we then demonstrate how to bypass this barrier. Specifically, we introduce a novel PDMP sampling scheme and show that it achieves an empirical complexity of $\mathcal{O}(d^\alpha)$, where $\alpha \in [0.2, 0.3]$ for Gaussian-tailed targets. In addition, this PDMP scheme is locally adaptive in both trajectory length and distance between velocity updates.
\end{abstract}

\section{Introduction}

Markov Chain Monte Carlo (MCMC) methods are standard tools in Bayesian inference for sampling from complex probability distributions. Piecewise Deterministic Markov Process (PDMP) samplers, such as the Bouncy Particle Sampler and the Zig-Zag process, have emerged as a distinct class of continuous-time, non-reversible samplers, offering an alternative to traditional reversible chains \cite{fearnhead2018piecewise,bouchard2018bouncy,bierkens2019zig}. To improve practical implementation, recent work \cite{chevallier2025practicalpdmpsamplingmetropolis} has expanded this framework in two distinct ways: by adapting the local trajectory-length adaptation of the No-U-Turn Sampler (NUTS) \cite{hoffman2014no} to PDMPs, and by introducing Metropolisation to allow the use of numerical approximations of the event rate. 

However, the practical viability of any sampler in high dimensions relies heavily on its computational scaling with the dimension $d$. While standard reversible algorithms like the Random Walk Metropolis \cite{roberts1997weak} scale as $\mathcal{O}(d)$, gradient-based methods such as the Metropolis-Adjusted Langevin Algorithm (MALA) \cite{roberts1998optimal} and Hamiltonian Monte Carlo (HMC) \cite{neal2011mcmc, betancourt2017conceptual} achieve scalings of $\mathcal{O}(d^{1/3})$ and $\mathcal{O}(d^{1/4})$ \cite{beskos2013optimal}, respectively. In contrast, the Bouncy Particle Sampler scales as $\mathcal{O}(d^{1/2})$ \cite{deligiannidis2021scaling}, and no existing PDMP achieves a better computational complexity.

In this work, we establish an $\Omega(d^{1/2})$ lower bound on the computational complexity for a broad class of local PDMP samplers. This result demonstrates that the $\mathcal{O}(d^{1/2})$ scaling is an inherent limit of the standard PDMP framework rather than an artifact of specific algorithmic designs. Given that other gradient-based methods, such as HMC and MALA, achieve better dimensional scaling, it is natural to wonder why this barrier exists for PDMPs. Our intuition is that this limitation arises because PDMP samplers require the target distribution to remain invariant at all continuous times, a highly stringent constraint.

Therefore, relaxing the continuous-time invariance requirement provides a pathway to surpass the $\mathcal{O}(d^{1/2})$ limit. We demonstrate this by designing a novel PDMP algorithm that forgoes this requirement. Inspired by the numerical approximation of Hamiltonian dynamics, we incorporate this approach into the Metropolised NUTS-PDMP framework \cite{chevallier2025practicalpdmpsamplingmetropolis}.
Empirically, we show that this method achieves a computational complexity of $\mathcal{O}(d^\alpha)$, where $\alpha \in [0.2, 0.3]$ for Gaussian-tailed targets. Beyond this improved scaling, our approach naturally adapts its step size to efficiently explore state spaces with varying geometric characteristics. While standard reversible HMC requires complex strategies to achieve this \cite{bourabee2025gistgibbsselftuninglocally, bou2024incorporating,Modi2024,modi2024atlas, turok2024sampling,bourabee2025withinorbitadaptiveleapfrognouturn}, our framework provides a straightforward mechanism that avoids these complexities.

This paper is organized as follows. Section~\ref{sec:previous_work} reviews the necessary background on Piecewise Deterministic Markov Processes and the Metropolised NUTS PDMP framework. Section~\ref{sec:complexity_bound} establishes the theoretical complexity lower bound for standard PDMP samplers. Section~\ref{sec:algorithm} introduces our novel sub-$\sqrt{d}$ algorithm, followed by Section~\ref{sec:empirical_results} which presents empirical scaling results. Finally, Section~\ref{sec:discussion} concludes with a discussion of our findings.
\section{Previous work}
\label{sec:previous_work}

This section introduces Piecewise Deterministic Markov Processes (PDMPs) for sampling and outlines the Metropolised NUTS-PDMP framework. Readers interested solely in the complexity results of Section \ref{sec:complexity_bound} may focus entirely on Section \ref{sec:pdmp_for_sampling} and skip Sections \ref{sec:metropolised_pdmp} and \ref{sec:nuts_pdmp}.

\subsection{PDMPs for MCMC Sampling}
\label{sec:pdmp_for_sampling}

In Markov Chain Monte Carlo (MCMC) sampling, the goal is to draw samples from a target measure $\pi(x)$ defined on $\mathbb{R}^d$. Much like traditional discrete-time MCMC methods, Piecewise Deterministic Markov Processes (PDMPs) \cite{fearnhead2018piecewise} can be used to construct continuous-time Markov processes such that their invariant distribution is the target $\pi$.

A PDMP evolves exactly as its name implies: it moves along deterministic trajectories until a random event occurs, at which point the process jumps to a new state. Concretely, these dynamics are governed by three corresponding components:
\begin{enumerate}
    \item[i)] the driving deterministic motion, described by a vector field $\phi$;
    \item[ii)] the event rate $\lambda$, which designates the state-dependent intensity of a jump occurring;
    \item[iii)] the jump kernel $Q$, which specifies the distribution of the new state following an event.
\end{enumerate}

To ensure the process targets the desired invariant measure $\mu$, the choices of $\phi$, $\lambda$, and $Q$ must satisfy the stationary equation. This relationship is formalized using the adjoint of the PDMP generator \cite{fearnhead2018piecewise}. For a PDMP on state space $E$, the adjoint $L^*$ applied to a probability density $\mu$ is:
\begin{equation}
    \label{eq:adjoint-generator}
    L^*\mu(x) = -\text{div}(\phi(x)\mu(x)) + \int_E \lambda(y) \mu(y) Q(y, dx) - \lambda(x)\mu(x).
\end{equation}
The measure $\mu$ is invariant if and only if $L^*\mu(x) = 0$ for all $x \in E$. This equation is central to establishing the complexity bounds in Section \ref{sec:complexity_bound}.

To construct a PDMP that successfully explores the space and targets $\pi$, existing samplers typically augment the state space with an auxiliary velocity variable $v \in \mathbb{R}^d$. The process runs on the extended phase space $E = \mathbb{R}^d \times \mathbb{R}^d$, targeting the joint invariant measure $\mu(x, v) = \pi(x)\rho(v)$, where $\rho(v)$ is a chosen velocity distribution. 
While other variants exist, the two most well known PDMP samplers are the Bouncy Particle Sampler (BPS) \cite{bouchard2018bouncy} and the Zig-Zag Sampler (ZZS) \cite{bierkens2019zig}. Both rely on the auxiliary velocity space, and both employ constant-velocity dynamics giving the vector field $\phi(x,v) = (v, 0)$.

\paragraph{Bouncy Particle Sampler (BPS):} The velocity distribution $\rho(v)$ is typically uniform on the unit sphere. The reflection rate is $\lambda(x,v) = \max(0, v \cdot \nabla U(x))$. When an event occurs, the jump kernel $Q$ reflects the velocity across the hyperplane orthogonal to $\nabla U(x)$ using the deterministic flip operator $F(x, v) = \left(x, v - 2\frac{v \cdot \nabla U(x)}{\|\nabla U(x)\|^2}\nabla U(x)\right)$. To maintain ergodicity, BPS also resamples $v$ from $\rho(v)$ at a constant rate.
    
\paragraph{Zig-Zag Sampler (ZZS):} The velocity space is restricted to $\{-1, +1\}^d$. ZZS flips each coordinate $v_i$ independently with rate $\lambda_i(x, v) = \max(0, v_i \partial_{x_i} U(x))$. The kernel $Q_i$ applies a deterministic flip operator $F_i$ that reverses the sign of $v_i$ ($v_i \mapsto -v_i$) while leaving the position and other velocity components unchanged.

\subsection{Metropolised PDMP}
\label{sec:metropolised_pdmp}

In a PDMP, the event rate varies continuously along the deterministic trajectory. Because computing the time of the next jump is typically intractable, traditional simulation relies on Poisson thinning. This technique proposes jump times using an upper bound of the rate, which generates extra events. These proposed events are then probabilistically rejected, or \textit{thinned}, to match the true event distribution. However, this approach requires finding an upper bound for the rate, either analytically or numerically \cite{sutton2023concave,corbella2022automatic,andral2024automated}, and is a major difficulty for the pratical use of PDMP samplers. To bypass this difficulty, the algorithm presented in section \ref{sec:algorithm} relies on the Metropolised PDMP framework \cite{chevallier2025practicalpdmpsamplingmetropolis}, a different approach that does not rely on Poisson thinning and eliminates the need for upper bounds.
To ease the exposition, we introduce this framework in a simplified scenario characterized by:
\begin{enumerate}
    \item an invariant target distribution with density $\mu$,
    \item a continuous deterministic flow that preserves the Lebesgue measure,
    \item a single event rate function $\lambda$,
    \item a jump kernel $Q(\cdot|z) = \delta_{F(z)}$ where $F$ a volume-preserving mapping $F$ which satisfies $\mu(F(z)) = \mu(z)$ for any state $z$. In other words, whenever an event is occurs, the process undergoes a deterministic transition from state $z$ to $F(z)$.
\end{enumerate}

The core idea of the Metropolised PDMP approach is to use a numerically approximated PDMP path to build a proposal within a Metropolis-Hastings scheme. It is assumed that the continuous deterministic dynamics can be solved analytically, but the true rate $\lambda$ is substituted by a tractable numerical approximation $\tilde{\lambda}$ along the trajectory (for example, using a piecewise-constant scheme). For a given starting point $z_0$, the proposal is therefore $\tilde{z}_T$ the value at a fixed time $T$ of the simulated PDMP starting at $z_0$.

\paragraph{Path space}
In order to formulate the Metropolis-Hastings acceptance ratio, we must define the probability measure governing proposed paths. This requires a formal specification of the path space for times in $[0,T]$ for a fixed $T$. Since the jump kernel is deterministic, a generated trajectory is entirely determined by its initial position $z_0$ and the sequence of jump times $0 \leq t_1 < \dots < t_k \leq T$. Consequently, the path space is defined as:
\[
    W = E \times \bigcup_{k \in \mathbb{N}} \left(\, ]0,T[ \,\right)^k,
\]
where $E$ represents the state space of the process. Equipped with the Lebesgue measure on $]0,T[^k$ for each partition, we can express probability densities on $W$. Given a starting state $z_0$, the density of a path $\tilde{w} = (z_0, t_1, \dots, t_k)$ is given by:
\begin{equation}
    \label{eq:proba-path}
    \tilde{p}(\tilde{w} \mid z_0)
    = \left[ \prod_{i=1}^k 
        \tilde{\lambda}(\tilde{z}_{t_i^-}) 
        \exp\!\left( -\int_{t_{i-1}}^{t_i} \tilde{\lambda}(\tilde{z}_s)\,ds \right)  \right]
        \exp\!\left( -\int_{t_k}^{T} \tilde{\lambda}(\tilde{z}_s)\,ds \right).
\end{equation}

\paragraph{Backward process}
To compute the acceptance ratio, the framework do not integrate over all possible path going from some $z_0$ to some $\tilde{z}_T$, as this would be intractible. It however relies on the probablity of using the exact same path but backward. Of course, since PDMP are non reversible, this is usualy 0. To solve this issue, the framework relies on using a numerical approximation of the time reversal of the process. By leveraging the time-reversal properties of PDMPs \cite{lopker2013}, the backward process can be similarly defined on $W$. Under our simplified assumptions, the reversed PDMP process is characterized by:
\begin{enumerate}
    \item following the deterministic flow in reverse time,
    \item an event rate defined by $\lambda^r(z) = \lambda(F^{-1}(z))$,
    \item and a jump kernel given by $Q^r(\cdot|z) = \delta_{F^{-1}(z)}$.%$z \mapsto F^{-1}(z)$.
\end{enumerate}
The path-reversal mapping $R: W \to W$ is given by:
\[
    R: (z_0, t_1, \dots, t_k) \mapsto (\tilde{z}_T, T - t_k, \dots, T - t_1).
\]
\begin{remark}
    In our simplified setting, the Jacobian determinant (volume change) associated with the mapping $R$ evaluates strictly to $1$, since both the continuous deterministic flow and the jump transition kernel $F$ are assumed to preserve the Lebesgue measure. However, when adapting this framework to non-volume-preserving flows or alternative jump kernels, one must be careful to compute and include the appropriate Jacobian determinant into the Metropolis--Hastings acceptance ratio.
\end{remark}

\paragraph{Algorithm}
Starting from an initial state $z_0 \in E$, a single step of the Metropolised PDMP algorithm proceeds as follows. First, pick a time direction $\gamma \in \{-1, +1\}$ uniformly at random, corresponding to forward ($+1$) or backward ($-1$) simulation. Second, depending on $\gamma$, generate a path starting from $z_0$ using either the approximated forward process ($\tilde{w} = (\tilde{z}_t)_{t \in [0,T]}$) or backward process ($\tilde{w}^r = (\tilde{z}_t^r)_{t \in [0,T]}$). Finally, the terminal state of the generated path ($\tilde{z}_T$ or $\tilde{z}_T^r$) is accepted with probability:
\begin{equation}
    \label{eq:acceptance-proba}
    \alpha(\tilde{w}) =
    \begin{cases}
    1 \wedge \dfrac{\mu(\tilde{z}_T)\,\tilde{p}^r(R(\tilde{w})\mid \tilde{w}_T)}
    {\mu(z_0)\,\tilde{p}(\tilde{w} \mid z_0)}, & \gamma = 1, \\[1.5em]
    1 \wedge \dfrac{\mu(\tilde{z}_T^r)\,\tilde{p}(R(\tilde{w}^r)\mid \tilde{w}_T^r)}
    {\mu(z_0)\,\tilde{p}^r(\tilde{w}^r \mid z_0)}, & \gamma = -1,
    \end{cases}
\end{equation}
where $\tilde{p}^r$ denotes the path-space probability density for the approximated reverse process. This expression uses the fact that the volume change associated with the reversal map $R$ is exactly $1$.

Crucially, there are no restrictions on how the numerical approximation is constructed, provided it does not depend on past steps. Therefore, it is possible to design an approximation of the rate where the step size is chosen based on the local properties of the target.

\subsection{No-U-Turn PDMP}
\label{sec:nuts_pdmp}

While the Metropolised PDMP framework provides a rigorous way to simulate with approximate rates, the trajectory duration $T$ is fixed, while the optimal path length might varies significantly across different regions of the state space. The No-U-Turn PDMP (NUTS-PDMP) \cite{chevallier2025practicalpdmpsamplingmetropolis} addresses this by adapting the No-U-Turn methodology developped for HMC to PDMP samplers, providing a locally adaptive algorithm to chose $T$.
First, the No-U-Turn criterion is adapted to PDMPs in the following definition.

\begin{definition}[PDMP No-U-Turn criterion]
    Let $Z = (x_t, v_t)_{t \in \mathbb{R}}$ be a PDMP path experiencing velocity jumps at event times $\mathcal{E} \subset \mathbb{R}$. The path satisfies the PDMP No-U-Turn Criterion on an interval $[a, b]$ if for all event times $t_i, t_j \in \mathcal{E} \cap [a, b]$, the following conditions hold:
    \begin{align*}
        (x_{t_i} - x_{t_j}) \cdot v_{t_j^-} &< 0 \quad \text{for all } t_i < t_j, \\
        (x_{t_i} - x_{t_j}) \cdot v_{t_j^+} &< 0 \quad \text{for all } t_i < t_j \text{ with } t_j < b, \\
        (x_{t_i} - x_{t_j}) \cdot v_{t_j^+} &> 0 \quad \text{for all } t_i > t_j, \\
        (x_{t_i} - x_{t_j}) \cdot v_{t_j^-} &> 0 \quad \text{for all } t_i > t_j \text{ with } t_i > a.
    \end{align*}
\end{definition}

A single step of PDMP-NUTS is decomposed into two substeps. The first step generates a trajectory until the NUTS criterion is no longer valid. The second step resamples a time $t^*$ on this trajectory using a rule that ensures the law of the state at time $t^*$ is the target distribution.

\paragraph{Trajectory generation}
Starting from an initial state $z_0$:
\begin{enumerate}
    \item Sample a fraction $\alpha$ uniformly in $[0,1]$.
    \item Find the largest $T$ such that simulating the exact PDMP forward in time for a duration of $(1-\alpha)T$ and backward in time for $\alpha T$ satisfies the No-U-Turn criterion on $[-\alpha T, (1-\alpha)T]$. This generates a trajectory $\bar{Y}$ parameterized on $[-\alpha T, (1-\alpha)T]$ with $\bar{Y}_0 = z_0$. 
    \item We then define a time-shifted trajectory $Y$ on $[0, T]$ given by $Y_t = \bar{Y}_{t - \alpha T}$, with $l = \alpha T$ being the shift.
\end{enumerate}

\paragraph{Resampling step}
From now on, we assume that the law of $z_0$ is the target $\mu$. Since $l$ is the starting time on the shifted trajectory, $Y_l \sim \mu$. Therefore, one can ask: knowing $Y$ and that the law of the starting point is $\mu$, what are the possible starting times that could generate $Y$? This is equivalent to determining the conditional density $w$ of the shift $l$ given the trajectory $Y$. Then, any $t^*$ sampled from $w$ will satisfy that the law of $Y_{t^*}$ is the same as the law of $Y_l$, which means it is $\mu$. 

It is established in \cite{chevallier2025practicalpdmpsamplingmetropolis} that
\begin{equation}
    w(t) \propto \mu(Y_t) \, p_{W_{[t,T]}}(Y \mid Y_t) \, p^r_{W_{[0,t]}}(R(Y_{[0,t]}) \mid Y_t) \, \nu(t),
\end{equation}
where $p_{W_{[t,T]}}(\cdot \mid Y_t)$ is the probability density of generating the forward segment using the standard forward process starting from $Y_t$, and $p^r_{W_{[0,t]}}(\cdot \mid Y_t)$ is the probability density of generating the backward segment using the reverse process starting from $Y_t$. Here, $R$ denotes the time-reversal operator mapping a forward-parameterized segment to its time-reversed counterpart. The term $\nu(t)$ is a change-of-volume factor induced by the discrete stopping criterion: specifically, $\nu(t) = \frac{T-t}{T^2}$ if the criterion was violated in the forward direction, and $\nu(t) = \frac{t}{T^2}$ if violated in the backward direction.

When the PDMP is simulated exactly (and admits $\mu$ as its invariant distribution), the path densities cancel with the target measure, and this expression simplifies to:
\begin{equation}
    \label{eq:exact_w}
    w(t) \propto \nu(t).
\end{equation}
However, when the PDMP dynamics are approximated, we must evaluate the full unnormalized density:
\begin{equation}
    w(t) \propto \mu(Y_t) \, \tilde{p}_{W_{[t,T]}}(Y \mid Y_t) \, \tilde{p}^r_{W_{[0,t]}}(R(Y_{[0,t]}) \mid Y_t) \, \nu(t).
\end{equation}
This density can be evaluated pointwise using the path probabilities defined in the previous section. To sample from $w(t)$, we can employ a Metropolis-Hastings step. Given that $w(t)$ is exactly proportional to $\nu(t)$ in the exact setting (Equation \ref{eq:exact_w}), $\nu(t)$ serves as a natural proposal distribution, as we expect the numerical approximation not to drastically modify $w(t)$.

Hence, the complete resampling procedure proceeds as follows:
\begin{enumerate}
    \item Define the unnormalized conditional density function $w(t)$ on $[0,T]$, representing the probability of the starting time being $t$ given the generated trajectory $Y$.
    \item Compute the unnormalized weight at the current time shift, $w(l)$.
    \item Sample a proposed time $t_{prop}$ from the proposal density $\nu(t)$, and compute its corresponding weight $w(t_{prop})$.
    \item Accept the proposed time with the Metropolis-Hastings probability:
    \begin{equation}
        \alpha = 1 \wedge \frac{w(t_{prop})\nu(l)}{w(l)\nu(t_{prop})}.
    \end{equation}
    Let $t^* = t_{prop}$ if the proposal is accepted, and $t^* = l$ if it is rejected.
    \item Return $Y_{t^*}$ as the next state in the Markov chain.
\end{enumerate}

\section{Complexity lower bound for PDMP sampling algorithms}
\label{sec:complexity_bound}
The primary goal of this section is to establish an asymptotic lower bound on the computational complexity of Piecewise Deterministic Markov Process (PDMP) sampling algorithms as the spatial dimension $d \rightarrow \infty$. Evaluating this limit requires specifying a family of target distributions indexed by $d$, as the resulting complexity lower bound depends heavily on this choice. In this analysis, we focus on i.i.d. targets, assuming $\pi$ is the product measure of a one-dimensional marginal distribution $\pi_1$ such that $\pi(x) = \prod_{i=1}^d \pi_1(x_i)$. Before deriving this lower bound, we must specify what constitutes a PDMP sampling algorithm and how its computational cost is measured.

In what follows, we consider a target measure $\pi$ on $\mathbb{R}^d$ and a PDMP defined on an extended state space $E = \mathbb{R}^d \times \mathcal{Z}$, where $\mathcal{Z}$ is an auxiliary space. We denote by $\mu$ a joint distribution on $E$ such that $\pi$ is its marginal over $\mathbb{R}^d$, and by $\rho(\cdot \mid x)$ the corresponding conditional distribution on $\mathcal{Z}$. Let $\phi = (\phi_x,\phi_z)$ be the deterministic vector field governing the continuous dynamics on $\mathbb{R}^d \times \mathcal{Z}$, $\lambda$ be the jump rate, and $Q$ be the jump kernel.

With these components established, a PDMP sampling algorithm can be broadly defined as a mapping that constructs a process tailored to the target distribution:
\[
    \pi \mapsto (\mathcal{Z},\phi,\rho, \lambda, Q).
\]

Before analyzing specific algorithms, we must define their operational objective. The fundamental goal of any MCMC algorithm is to generate a sequence of independent and identically distributed (i.i.d.) samples from the target measure $\pi$. While a continuous-time Markov process generates inherently correlated spatial trajectories, practitioners measure efficiency by extracting samples spaced sufficiently far apart in time to be considered \textit{effectively independent}. 

\begin{assumption}[Effectively Independent Sampling Times]
    \label{assum:independent_times}
    There exists an infinite sequence of random times $0 \leq t_1 < t_2 < \dots < t_k < \dots$ increasing almost surely to infinity, such that the sequence of spatial states $(X_{t_k})_{k \geq 1}$ acts as a sequence of independent draws from $\pi$. Specifically, we assume that the empirical average of the distances between consecutive points converges almost surely to the expected distance between two independent random vectors distributed according to $\pi$:
    \[
        \lim_{K \to \infty} \frac{1}{K} \sum_{k=1}^{K-1} \|X_{t_{k+1}} - X_{t_k}\| = \mathbb{E}_{\pi \otimes \pi}[\|X - Y\|] \quad a.s.
    \]
\end{assumption}

The complexity of a PDMP sampling algorithm is then defined as the average computation required to generate one effectively independent sample. In the context of PDMPs, this computational cost is the number of jump events. In practice, the number of events serves as a strict lower bound on the cost: simulating event times (whether via thinning or Metropolised approaches) requires gradient evaluations at least at every event.

While the mapping $\pi \mapsto (\mathcal{Z},\phi,\rho, \lambda, Q)$ is highly general, we restrict our complexity analysis to a subset of reasonable algorithms that satisfy the following structural assumptions.

\begin{assumption}[Target-independence of the flow and auxiliary space]
\begin{enumerate}
    \item The auxiliary space $\mathcal{Z}$ may depend only on the dimension $d$ of the state space. 
    \item The deterministic vector field $\phi$ and the auxiliary space $\mathcal{Z}$ are independent of the target distribution $\pi$. 
    (Note: This ensures the algorithm relies on general-purpose, tractable flows rather than target-specific flows, such as exact Hamiltonian trajectories, which may lack closed-form solutions).
\end{enumerate}
\end{assumption}

\begin{comment}
\begin{assumption}[Locality of the auxiliary distribution]
    The conditional probability distribution $\rho(\cdot \mid x)$ depends on $x$ exclusively through local evaluations of the target $\pi$ and its derivatives at $x$ (e.g., $\pi(x), \nabla \pi(x)$), rather than through an explicit spatial dependence on $x$ or global properties of $\pi$.
\end{assumption}
\end{comment}

\begin{assumption}[Local jumps]
    The jump kernel $Q$ can only change the auxiliary variable $z$, and not the spatial position $x$. Consequently, the spatial trajectories $t \mapsto X_t$ are absolutely continuous, as is standard for PDMP-based sampling algorithms.
\end{assumption}

\subsection{Preliminary results}

\begin{assumption}[Ergodicity]
    \label{assum:ergodicity}
    The PDMP $(X_t, Z_t)_{t \geq 0}$ is ergodic with respect to its invariant measure $\mu$. Specifically, for any $\mu$-integrable function $f$, the time average along the trajectory converges almost surely to the spatial expectation:
    \[
        \lim_{S \to \infty} \frac{1}{S} \int_0^S f(X_t, Z_t) dt = \mathbb{E}_\mu[f] \quad a.s.
    \]
\end{assumption}

\begin{proposition}[Lower Bound on PDMP Sampling Complexity for i.i.d. targets]
    \label{prop:complexity_lower_bound}
    Let the target distribution $\pi$ on $\mathbb{R}^d$ be an i.i.d. product measure such that $\pi(x) = \prod_{i=1}^d \pi_1(x_i)$. Let the PDMP satisfy Assumption \ref{assum:ergodicity} and Assumption \ref{assum:independent_times}. 
    
    The sampling complexity $C_d$, defined as the almost sure asymptotic number of jump events required per effectively independent sample ($C_d = \lim_{K \to \infty} \frac{N(t_K)}{K}$), satisfies the following lower bound:
    \[
        C_d \geq \frac{C(\pi_1)\sqrt{d}}{\mathbb{E}_\mu[\|\phi_x\|]} \mathbb{E}_\mu[\lambda]
    \]
    where $C(\pi_1)$ is the constant defined in Lemma \ref{lem:distance_lower}.
\end{proposition}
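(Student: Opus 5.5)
The plan has three parts: a distance–time argument, a counting argument for events, and a combination of the two. Lemma \ref{lem:distance_lower} is assumed to give $\mathbb{E}_{\pi\otimes\pi}[\|X-Y\|] \geq C(\pi_1)\sqrt{d}$ for i.i.d. product targets. It is presumably proved via $\|X-Y\|$ concentrating around $\sqrt{d\,\mathbb{E}[(X_1-Y_1)^2]}$, or through a lower bound such as $\mathbb{E}\|X-Y\|\ge \mathbb{E}\|X-Y\|^2{}^{3/2}/\mathbb{E}\|X-Y\|^4{}^{1/2}$ combined with independence across coordinates.

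\textbf{Step 1: distance travelled.} By the local jumps assumption, $X$ is absolutely continuous with $\dot X_t = \phi_x(X_t,Z_t)$. Hence for each $k$,
\[
\|X_{t_{k+1}}-X_{t_k}\| \le \int_{t_k}^{t_{k+1}}\|\phi_x(X_s,Z_s)\|\,ds .
\]
Summing over $k$ and dividing by $K$ gives
\[
\frac1K\sum_{k=1}^{K-1}\|X_{t_{k+1}}-X_{t_k}\| \le \frac{t_K}{K}\cdot\frac{1}{t_K}\int_0^{t_K}\|\phi_x\|\,ds .
\]
Since $t_K\to\infty$ a.s., Assumption \ref{assum:ergodicity} applied to $f=\|\phi_x\|$ shows that the last factor tends to $\mathbb{E}_\mu[\|\phi_x\|]$. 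Assumption \ref{assum:independent_times} and Lemma \ref{lem:distance_lower} then give
\[
\liminf_K \frac{t_K}{K} \ge \frac{C(\pi_1)\sqrt d}{\mathbb{E}_\mu[\|\phi_x\|]}.
\]
If $\mathbb{E}_\mu\|\phi_x\|=\infty$ the bound is trivial, and if it is $0$ the process cannot move, which is a contradiction.

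\textbf{Step 2: counting events.} Let $N(t)$ be the number of jumps up to $t$. The process $M_t = N(t)-\int_0^t\lambda(X_s,Z_s)\,ds$ is a local martingale, so the event count should satisfy $N(t)/t\to\mathbb{E}_\mu[\lambda]$ a.s. I would prove this with a strong law for counting processes. For instance, a time change makes $N$ a unit Poisson process in the clock $\Lambda(t)=\int_0^t\lambda\,ds$. Then $N(t)/\Lambda(t)\to1$ a.s. whenever $\Lambda(t)\to\infty$, and by ergodicity $\Lambda(t)/t\to\mathbb{E}_\mu[\lambda]$. The case $\mathbb{E}_\mu\lambda=0$ is trivial. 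Evaluating along $t=t_K\to\infty$ gives $N(t_K)/t_K\to\mathbb{E}_\mu[\lambda]$ a.s.

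\textbf{Step 3: combination.} Writing
\[
\frac{N(t_K)}{K}=\frac{N(t_K)}{t_K}\cdot\frac{t_K}{K}
\]
and taking limits along the a.s. event on which both Step 1 and Step 2 hold yields
\[
C_d \ge \mathbb{E}_\mu[\lambda]\,\frac{C(\pi_1)\sqrt d}{\mathbb{E}_\mu[\|\phi_x\|]} .
\]
The existence of the limit $C_d$ is part of the definition in the statement, so using $\liminf$ suffices.

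The main obstacle is Step 2. Ergodicity as stated concerns time averages of functions of the state, not the jump counting process. The time-change/Poisson strong law argument must handle the fact that the $t_K$ are random times, which is harmless here since the convergence holds pathwise for all $t\to\infty$. A secondary subtlety is measurability of the $t_k$ relative to the process, which the argument avoids because it only uses deterministic-in-$t$ a.s. limits evaluated along $t_K$.
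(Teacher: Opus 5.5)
Your proposal is correct and follows the paper's proof essentially step for step. It uses the arc-length bound $\int_0^{t_K}\|\phi_x\|\,ds \ge \sum_{k}\|X_{t_{k+1}}-X_{t_k}\|$ with ergodicity and Assumption~\ref{assum:independent_times} to get $\liminf_K t_K/K \ge C(\pi_1)\sqrt{d}/\mathbb{E}_\mu[\|\phi_x\|]$, the limit $N(t)/t\to\mathbb{E}_\mu[\lambda]$ via the compensator $\Lambda(t)=\int_0^t\lambda\,ds$, and the factorization $N(t_K)/K=(N(t_K)/t_K)(t_K/K)$. The only differences are minor: you obtain $N(t)/\Lambda(t)\to 1$ by time-changing $N$ into a unit Poisson process, whereas Lemma~\ref{lemma:number_of_events} applies the martingale strong law $M_t/\langle M\rangle_t\to 0$ to $M=N-\Lambda$ with $\langle M\rangle=\Lambda$; and the paper proves Lemma~\ref{lem:distance_lower} directly from $\|v\|_2\ge d^{-1/2}\|v\|_1$, giving $C(\pi_1)=\mathbb{E}|X_1-Y_1|$ for every $d$ with no concentration argument.
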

\begin{proof}[Proof Sketch]
    The full proof is deferred to Appendix \ref{app:proofs}, and only the general ideas are given here.
    By ergodicity, the average number of jumps per unit of time along a trajectory is given by $\mathbb{E}_\mu[\lambda]$.
    
    The complexity $C_d$ is the average number of jumps per unit of time, multiplied by $\tau$ the average time between two effectively independent samples:
    \[
        C_d = \mathbb{E}_\mu[\lambda] \tau
    \]
    The distance between two independant samples grows as $\sqrt{d}$, therefore the travelled distanced between two independant samples is at least $C(\pi_1)\sqrt{d}$ for some constant $C(\pi_1)$ only depending on $\pi_1$. Furtermore the average speed is $\mathbb{E}_\mu[\|\phi_x\|]$, hence $\tau \geq \frac{C(\pi_1)\sqrt{d}}{\mathbb{E}_\mu[\|\phi_x\|]}$, which concludes the proof.
\end{proof}

\begin{lemma}[0 mean divergence]
    \label{lemma:zero-mean-div}
    Let $\mu$ be a probability distribution on some vector space $E$, and $x\mapsto \phi(x), x \mapsto \lambda(x), c\mapsto Q(x,\cdot)$ be a vector field, a jump rate and a jump kernel of a PDMP. If $\mu$ is invariant by the PDMP then:
    \[
        E_\mu[\frac{div(\phi \mu)}{\mu}] = 0
    \]
\end{lemma}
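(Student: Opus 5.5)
The plan is to integrate the stationarity equation $L^*\mu = 0$ from \eqref{eq:adjoint-generator} over the state space $E$, after dividing by $\mu$ and weighting by $\mu$. Concretely, $\mathbb{E}_\mu\!\left[\frac{\mathrm{div}(\phi\mu)}{\mu}\right] = \int_E \mathrm{div}(\phi\mu)(x)\,dx$ (on the set where $\mu>0$, which is all that matters for the expectation). Invariance gives, pointwise,
\[
\mathrm{div}(\phi\mu)(x) = \int_E \lambda(y)\mu(y)\,Q(y,dx) - \lambda(x)\mu(x).
\]
So it suffices to show that the right-hand side integrates to zero over $E$.

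Integrating the jump terms, Fubini and the fact that $Q(y,\cdot)$ is a probability kernel give
\[
\int_E\int_E \lambda(y)\mu(y)\,Q(y,dx)\,dy = \int_E \lambda(y)\mu(y)\,dy = \mathbb{E}_\mu[\lambda].
\]
The second term integrates to $\mathbb{E}_\mu[\lambda]$ as well, so the difference is zero. Hence $\mathbb{E}_\mu[\mathrm{div}(\phi\mu)/\mu] = 0$.

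The integrability condition needed is $\mathbb{E}_\mu[\lambda] < \infty$, which I will assume, as is implicit in Proposition~\ref{prop:complexity_lower_bound} where $\mathbb{E}_\mu[\lambda]$ appears. Without it, Fubini and the splitting into two finite integrals are not justified. With it, both jump terms are in $L^1(E)$, and so is $\mathrm{div}(\phi\mu)$, which makes the expectation well defined.

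The main obstacle is interpretive rather than computational. When $Q$ contains a Dirac mass, the adjoint equation \eqref{eq:adjoint-generator} has to be read as an identity between measures, not densities. In that case I would redo the argument in weak form: apply the generator identity $\int L f\,d\mu = 0$ to test functions $f$ and let $f \uparrow 1$. Alternatively, one can integrate the measure identity directly against the constant function $1$; the total mass of the jump-in measure is $\mathbb{E}_\mu[\lambda]$ again by the kernel property. Either way the two jump contributions cancel. A more direct alternative, the divergence theorem giving $\int \mathrm{div}(\phi\mu) = 0$ from decay at infinity, would need extra boundary assumptions. I would avoid it, since the jump-balance argument uses only invariance.
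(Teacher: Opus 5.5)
Your proposal is correct and follows essentially the same route as the paper: integrate the stationarity equation over $E$, use Fubini and $\int_E Q(y,dx)=1$ so that the incoming and outgoing jump terms both equal $\int_E \lambda\mu$ and cancel, then rewrite $\int_E \mathrm{div}(\phi\mu)\,dx$ as the $\mu$-expectation. Your explicit assumption $\mathbb{E}_\mu[\lambda]<\infty$ and your remark on reading the equation as a measure identity when $Q$ is a Dirac kernel are sensible refinements that the paper leaves implicit.
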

\begin{proof}
    The proof is deferred to Appendix \ref{app:proofs}.
\end{proof}

\begin{proposition}[lower bound of $\lambda$]
    \label{prop:lower-bound}
    Let $\mu$ be a probability distribution on some vector space $E$, and $x\mapsto \phi(x), x \mapsto \lambda(x), c\mapsto Q(x,\cdot)$ be a vector field, a jump rate and a jump kernel of a PDMP. If $\mu$ is invariant by the PDMP then for all $x$:
    \[
        \lambda(x) \geq max(0, -div(\phi \mu)(x) / \mu(x)). 
    \]
    This implies
    \[
        E_\mu[\lambda(x)] \geq \frac{1}{2} E_\mu[\frac{|div(\phi \mu)|}{\mu}].
    \]
\end{proposition}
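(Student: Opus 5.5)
The plan is to start from the stationarity equation \eqref{eq:adjoint-generator}. Invariance of $\mu$ gives, for every $x \in E$,
\[
    \lambda(x)\mu(x) = -\mathrm{div}(\phi\mu)(x) + \int_E \lambda(y)\mu(y)\,Q(y,dx).
\]
The integral term is the density of a nonnegative measure, since $\lambda \geq 0$, $\mu \geq 0$ and $Q$ is a Markov kernel. Dropping it therefore gives $\lambda(x)\mu(x) \geq -\mathrm{div}(\phi\mu)(x)$. Dividing by $\mu(x) > 0$ and using $\lambda \geq 0$ yields the pointwise bound $\lambda(x) \geq \max(0, -\mathrm{div}(\phi\mu)(x)/\mu(x))$. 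I would restrict to points with $\mu(x) > 0$, since only these matter for $\mu$-expectations, and interpret the identity in the density sense (a.e.) if $Q$ is not absolutely continuous.

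For the integrated bound, write $g = \mathrm{div}(\phi\mu)/\mu$ and split $g = g^+ - g^-$ with $g^\pm \geq 0$. The pointwise bound gives $\lambda \geq g^-$, so $E_\mu[\lambda] \geq E_\mu[g^-]$. Lemma~\ref{lemma:zero-mean-div} gives $E_\mu[g] = 0$, that is, $E_\mu[g^+] = E_\mu[g^-]$. Since $|g| = g^+ + g^-$, we get $E_\mu[|g|] = 2E_\mu[g^-]$, and hence $E_\mu[\lambda] \geq \tfrac12 E_\mu[|g|]$.

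The only delicate point is the integrability needed to apply Lemma~\ref{lemma:zero-mean-div} and to split the expectation into positive and negative parts. If $E_\mu[|g|] = \infty$, I would argue from the pointwise bound that $E_\mu[g^-]$ must then also be infinite, or else that the lemma's hypotheses already ensure finiteness. Either way the inequality holds trivially or by the argument above. Apart from this, the proof is a direct manipulation of the stationary equation.
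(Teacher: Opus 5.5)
Your proposal is correct and follows the paper's proof essentially step for step: you drop the nonnegative incoming-jump term in the stationary equation to get $\lambda \ge g^-$, then use Lemma~\ref{lemma:zero-mean-div} to get $E_\mu[g^-] = \frac12 E_\mu[|g|]$. On integrability, which the paper ignores, the pointwise bound alone does not settle it; what does is that the stationary equation plus the triangle inequality give $\int_E |\mathrm{div}(\phi\mu)|\,dx \le 2E_\mu[\lambda]$ directly (by Tonelli, finite or not), which is already the integrated bound.
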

\begin{proof}
    From the stationary equation (Equation \ref{eq:adjoint-generator}), the adjoint of the generator applied to the invariant measure $\mu$ evaluates to zero.
    Rearranging to isolate the outgoing jump term yields:
    \[
        \lambda(x)\mu(x) = -\text{div}(\phi(x)\mu(x)) + \int_E \lambda(y) \mu(y) Q(y, dx).
    \]
    Since the jump rate $\lambda(y)$, measure $\mu(y)$, and transition kernel $Q(y, dx)$ are all non-negative, the integral representing incoming jumps is non-negative ($\geq 0$). Dropping this term provides a lower bound:
    \[
        \lambda(x)\mu(x) \geq -\text{div}(\phi(x)\mu(x)) \implies \lambda(x) \geq -\frac{\text{div}(\phi(x)\mu(x))}{\mu(x)}.
    \]
    Because the jump rate itself is strictly non-negative ($\lambda(x) \geq 0$), we combine the bounds:
    \[
        \lambda(x) \geq \max\left(0, -\frac{\text{div}(\phi \mu)(x)}{\mu(x)}\right).
    \]
    
    For the expectation inequality, define $g(x) = \frac{\text{div}(\phi(x)\mu(x))}{\mu(x)}$. From Lemma \ref{lemma:zero-mean-div}, $E_\mu[g(x)] = 0$. 
    Decomposing $g(x)$ into its positive and negative parts, $g(x) = g^+(x) - g^-(x)$, the zero-mean property implies the parts are balanced: $E_\mu[g^+] = E_\mu[g^-]$. Consequently, the expectation of the negative part is exactly half the expectation of the absolute value: 
    \[
        E_\mu[g^-] = \frac{1}{2} E_\mu[|g|].
    \]
    Using our established bound, $\lambda(x) \geq \max(0, -g(x)) = g^-(x)$. Taking the expectation with respect to $\mu$ yields:
    \[
        E_\mu[\lambda(x)] \geq E_\mu[g^-(x)] = \frac{1}{2} E_\mu[|g(x)|] = \frac{1}{2} E_\mu\left[\frac{|\text{div}(\phi \mu)|}{\mu}\right].
    \]
    This completes the proof.
\end{proof}

\subsection{Main theorem}

We restrict ourselves to a smaller, but practically relevant, class of PDMPs that cover all existing algorithms. Specifically, we assume that the sampling algorithm exploits the independence structure of the target distribution as follows:

\begin{assumption}[Independence Structure for i.i.d. Targets]
    \label{assum:independence_structure}
    First, the state space for a given dimension $d$ can be written as $(\mathbb{R} \times \mathcal{Z}_1)^d$, where $\mathcal{Z}_1$ is the auxiliary space used for targets in $\mathbb{R}$. 
    
    Second, for an i.i.d. target distribution---meaning $\pi(x) = \prod_{i=1}^d \pi_1(x_i)$---the algorithm yields a structure on each marginal space $\mathbb{R} \times \mathcal{Z}_1$ that is identical to what would be used in the 1D case. This implies:
    \begin{enumerate}
        \item The auxiliary conditional distribution completely factorizes: $\rho(x,z) = \prod_{i=1}^d \rho_1(x_i,z_i)$, leading to a joint invariant measure $\mu(x,z) = \prod_{i=1}^d \mu_1(x_i,z_i)$ with $\mu_1(x_1,z_1) = \pi_1(x_1) \rho_1(x_1,z_1)$.
        \item The deterministic vector field decomposes cleanly: $\phi(x,z) = (\phi_1(x_1,z_1), \dots, \phi_1(x_d,z_d))$, where $\phi_1$ is the exact vector field used in the 1D case.
    \end{enumerate}
\end{assumption}

\begin{theorem}[Complexity for i.i.d. targets]
    \label{th:complexity}
    Under Assumption \ref{assum:independence_structure}, if the divergence of the 1D process has a non-zero finite variance, the expected jump rate scales as $\Omega(\sqrt{d})$, and the overall sampling complexity scales as $C = \Omega(\sqrt{d})$.
\end{theorem}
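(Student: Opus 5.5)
The plan is to combine Proposition~\ref{prop:lower-bound} with the product structure of Assumption~\ref{assum:independence_structure}. We then feed the resulting bound on $\mathbb{E}_\mu[\lambda]$ into Proposition~\ref{prop:complexity_lower_bound}.

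\textbf{Divergence as a sum.} Under Assumption~\ref{assum:independence_structure}, $\mu=\prod_i \mu_1(x_i,z_i)$ and $\phi=(\phi_1(x_1,z_1),\dots,\phi_1(x_d,z_d))$. Hence
\[
g(x,z):=\frac{\mathrm{div}(\phi\mu)}{\mu}=\sum_{i=1}^d g_1(x_i,z_i),\qquad g_1:=\frac{\mathrm{div}_1(\phi_1\mu_1)}{\mu_1}.
\]
This holds because the $i$-th block of $\phi$ only depends on $(x_i,z_i)$, so $\mathrm{div}(\phi\mu)=\sum_i \mu_{-i}\,\mathrm{div}_1(\phi_1\mu_1)(x_i,z_i)$, where $\mu_{-i}$ is the product of the other factors. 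Under $\mu$ the terms $g_1(x_i,z_i)$ are i.i.d. with law $g_1\#\mu_1$. Each has zero mean: apply Lemma~\ref{lemma:zero-mean-div} in dimension one, or note directly that $\mathbb{E}_{\mu_1}[g_1]=\int \mathrm{div}_1(\phi_1\mu_1)=0$ under mild decay at infinity. By hypothesis each also has finite, nonzero variance $\sigma^2$.

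\textbf{Lower bound on $\mathbb{E}_\mu[|g|]$.} By Proposition~\ref{prop:lower-bound}, $\mathbb{E}_\mu[\lambda]\ge \tfrac12\mathbb{E}_\mu|S_d|$, where $S_d=\sum_{i=1}^d g_1(x_i,z_i)$ is a sum of i.i.d. centred variables. The central limit theorem gives $S_d/(\sigma\sqrt d)\Rightarrow N(0,1)$. Moreover $\mathbb{E}[(S_d/\sigma\sqrt d)^2]=1$, so the family $|S_d|/(\sigma\sqrt d)$ is uniformly integrable, and therefore
\[
\mathbb{E}|S_d|/(\sigma\sqrt d)\to \sqrt{2/\pi}.
\]
Alternatively, we can avoid uniform integrability: weak convergence together with Fatou applied to $|\cdot|$ via Skorokhod gives $\liminf \mathbb{E}|S_d|/\sqrt d\ge \sigma\sqrt{2/\pi}$, which is enough. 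Consequently $\mathbb{E}_\mu[\lambda]\ge c\,\sigma\sqrt d$ for $d$ large, which is the first claim.

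\textbf{Complexity.} Proposition~\ref{prop:complexity_lower_bound} gives $C_d\ge C(\pi_1)\sqrt d\,\mathbb{E}_\mu[\lambda]/\mathbb{E}_\mu[\|\phi_x\|]$. The speed satisfies $\|\phi_x\|^2=\sum_i \phi_{1,x}(x_i,z_i)^2$, so by Jensen $\mathbb{E}_\mu\|\phi_x\|\le \sqrt{d\,\mathbb{E}_{\mu_1}[\phi_{1,x}^2]}=O(\sqrt d)$, provided $\phi_{1,x}$ has a finite second moment. Combining the two bounds gives $C_d\ge C(\pi_1)\sqrt d\cdot c\sigma\sqrt d/(K\sqrt d)=\Omega(\sqrt d)$.

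\textbf{Main obstacle.} I expect the main obstacle to be this last step: controlling $\mathbb{E}_\mu\|\phi_x\|$ requires an integrability assumption on the 1D speed, $\mathbb{E}_{\mu_1}[\phi_{1,x}^2]<\infty$, which is implicit but not stated. I would state it explicitly, noting that it holds trivially for BPS/ZZS-type bounded velocities. A further technical point is justifying the zero mean of $g_1$, which needs boundary terms to vanish; Lemma~\ref{lemma:zero-mean-div} applied to the 1D process covers this.
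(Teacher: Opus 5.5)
Your proposal is correct and follows the paper's proof essentially step by step. You split the divergence ratio into a sum of i.i.d.\ centred terms, use the CLT with $L^2$-boundedness (uniform integrability) to get $\mathbb{E}_\mu|S_d|\sim\sigma\sqrt{2/\pi}\sqrt d$, transfer this to $\mathbb{E}_\mu[\lambda]$ via Proposition~\ref{prop:lower-bound}, and conclude with Proposition~\ref{prop:complexity_lower_bound} and the Jensen bound $\mathbb{E}_\mu\|\phi_x\|\le v_1\sqrt d$. The finite second moment of the 1D speed that you flag is the same assumption the paper introduces mid-proof, and your Fatou-based alternative to uniform integrability is valid because only the lower bound is needed.
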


\begin{proof}
    Let $S_d(x,z) = \frac{\text{div}(\phi \mu)(x,z)}{\mu(x,z)}$. Given the factorized structure of both the vector field $\phi$ and the invariant measure $\mu$, the divergence operator distributes over the individual coordinates:
    $$ \text{div}(\phi\mu) = \sum_{i=1}^d \text{div}_1(\phi_1 \mu_1)(x_i, z_i) \prod_{j \neq i} \mu_1(x_j, z_j) $$

    Dividing by the joint measure $\mu = \prod_{j=1}^d \mu_1(x_j, z_j)$ cancels out the product terms, yielding a sum of independent ratios:
    $$ S_d(x,z) = \sum_{i=1}^d \frac{\text{div}_1(\phi_1 \mu_1)(x_i, z_i)}{\mu_1(x_i, z_i)} $$

    Let $Y_i = \frac{\text{div}_1(\phi_1 \mu_1)(x_i, z_i)}{\mu_1(x_i, z_i)}$. Under the stationary measure $\mu$, the random variables $Y_1, \dots, Y_d$ are independent and identically distributed. By Lemma \ref{lemma:zero-mean-div} applied to the 1D marginals, we know that $E_{\mu_1}[Y_i] = 0$. Let $\sigma^2 = E_{\mu_1}[Y_i^2]$ be the strictly positive, finite variance of these terms.

    By the Central Limit Theorem (CLT), as the dimension $d \to \infty$, the normalized sum $\frac{S_d}{\sigma \sqrt{d}}$ converges in distribution to a standard normal $\mathcal{N}(0, 1)$. To deduce the asymptotic behavior of the expectation $E_\mu[|S_d|]$, we must establish that the sequence $\left\{ \frac{S_d}{\sigma \sqrt{d}} \right\}_{d \ge 1}$ is uniformly integrable. 

    Because the variables $Y_i$ are independent with mean zero and variance $\sigma^2$, the variance of the normalized sum is constant for all $d$:
    $$ E_\mu\left[ \left( \frac{S_d}{\sigma \sqrt{d}} \right)^2 \right] = \frac{1}{\sigma^2 d} (d \sigma^2) = 1 $$

    Since the sequence $\left\{ \frac{S_d}{\sigma \sqrt{d}} \right\}_{d \ge 1}$ is bounded in $L^2(\mu)$, it is uniformly integrable in $L^1(\mu)$. This permits us to pass the limit inside the expectation, yielding convergence in $L^1$:
    $$ \lim_{d \to \infty} E_\mu\left[ \left| \frac{S_d}{\sigma \sqrt{d}} \right| \right] = E[|\mathcal{N}(0, 1)|] = \sqrt{\frac{2}{\pi}} $$

    Multiplying by the normalization factor $\sigma \sqrt{d}$, we obtain the exact asymptotic equivalence for the expected absolute sum:
    $$ E_\mu[|S_d|] \sim \sigma \sqrt{\frac{2}{\pi}} \sqrt{d} = \Theta(\sqrt{d}) $$

    Using the lower bound established proposition \ref{prop:lower-bound}, the expected jump rate must satisfy:
    $$ E_\mu[\lambda] \geq \frac{1}{2} E_\mu\left[ \frac{|\text{div}(\phi \mu)|}{\mu} \right] = \frac{1}{2} E_\mu[|S_d|] = \Omega(\sqrt{d}) $$

        To evaluate the overall sampling complexity $C_d$, we now invoke Proposition \ref{prop:complexity_lower_bound}, which establishes that:
    \[
        C_d \geq \frac{C \sqrt{d}}{\mathbb{E}_\mu[\|\phi_x\|]} \mathbb{E}_\mu[\lambda]
    \]
    where $C > 0$ is the distance constant from Lemma \ref{lem:distance_lower}. 
    
    We must bound the expected speed $\mathbb{E}_\mu[\|\phi_x\|]$ in the denominator. By our structural independence assumptions, the continuous flow decomposes as $\phi_x(x,z) = (\phi_{x,1}(x_1,z_1), \dots, \phi_{x,1}(x_d,z_d))$. Applying Jensen's inequality, we bound the $L^1$ norm by the $L^2$ norm:
    \[
        \mathbb{E}_\mu[\|\phi_x\|] \le \sqrt{\mathbb{E}_\mu[\|\phi_x\|^2]}  = \sqrt{d \mathbb{E}_{\mu_1}[\phi_{x,1}^2]}
    \]
    Assuming the 1D flow has a finite second moment (let $v_1^2 = \mathbb{E}_{\mu_1}[\phi_{x,1}^2] < \infty$), we have $\mathbb{E}_\mu[\|\phi_x\|] \le v_1 \sqrt{d}$.
    
    Substituting this upper bound into our complexity inequality gives:
    \[
        C_d \geq \frac{C \sqrt{d}}{v_1 \sqrt{d}} \mathbb{E}_\mu[\lambda] = \frac{C}{v_1} \mathbb{E}_\mu[\lambda]
    \]
    Since $C$ and $v_1$ are constants strictly independent of the dimension $d$, and we have already shown that $\mathbb{E}_\mu[\lambda] = \Omega(\sqrt{d})$, it immediately follows that:
    \[
        C_d = \Omega(\sqrt{d})
    \]
    This completes the proof.
\end{proof}

\begin{remark}[BPS]
    While the standard formulation of BPS (using velocities on the unit sphere) does not satisfy the assumptions of Theorem \ref{th:complexity}, BPS can equivalently be defined with normally distributed velocities. Therefore, our theorem still applies.
\end{remark}

\subsection{Conjecture}

While Theorem \ref{th:complexity} establishes the $\Omega(\sqrt{d})$ bound under strict structural independence assumptions, we strongly suspect this limitation is universal for PDMP samplers where the rates and velocity distribution can only exploit local information on the target measure. Fundamentally, for an algorithm to break this complexity barrier on an i.i.d. target, its continuous flow would need to preserve the target measure increasingly well as the dimension grows, which seems implausible. We formally state this intuition in the following conjecture:

\begin{conjecture}
    For any PDMP sampling algorithm satisfying Assumption 1 (Target-independence of the flow and auxiliary space), Assumption 2 (Locality of the auxiliary distribution), and Assumption 3 (Local jumps), there exists a one-dimensional distribution $\pi_1$ (satisfying reasonable assumptions) such that the sampling complexity for the i.i.d. target distribution $\pi(x) = \prod_{i=1}^d \pi_1(x_i)$ on $\mathbb{R}^d$ is bounded below by $\Omega(\sqrt{d})$. 
\end{conjecture}

As a brief remark, if the algorithm is assumed to be permutation-invariant when applied to an i.i.d. target, the variables $Y_i$ introduced in the proof act as exchangeable random variables. This may offer a viable pathway toward proving the conjecture.

\subsection{Breaking the limit}

Although PDMP samplers hold great promise as non-reversible processes, theorem \ref{th:complexity} highlights that traditional PDMPs complexity scales at best as $\mathcal{O}(d^{1/2})$, making them less efficient than HMC ($\mathcal{O}(d^{1/4})$) and Langevin-based samplers ($\mathcal{O}(d^{1/3})$).
To break this limit, we must therefore look beyond the usual setting of PDMP samplers. We provide here three possible directions to do so. The last of these directions is developed in the next section into a novel algorithm, which empirically achieves a scaling between $\mathcal{O}(d^{0.2})$ and $\mathcal{O}(d^{0.3})$ for Gaussian-tailed targets.

\paragraph{Target-dependent dynamics: } The Boomerang sampler is such an algorithm \cite{bierkens2020boomerang}. By introducing a priori knowledge about the target distribution, we can construct dynamics that are more adapted to the target distribution, leading to better sampling complexity. The current limitation of the Boomerang sampler is that it only works for distributions close to gaussian targets, and require manual inputs from the user. Ideally an algorithm should automatically build the mapping $\pi \mapsto \phi$, that is, to choose the dynamics automatically given a target distribution. The author believes it might be possible to use recent advances in LLMs to do so.

\paragraph{Numerical optimisation through Metropolisation: } Following the work of \cite{chevallier2025practicalpdmpsamplingmetropolis}, the rate of the process can be simulated approximately. Therefore, even if the number of events is $O(\sqrt{d})$, it may be possible to build an approximation that remains valid across multiple events using only one evaluation of the gradient. This can be done for ZigZag, for example, reducing the overall complexity from $O(d)$ to something closer to $O(\sqrt{d})$. However, this requires the jump kernel $Q$ to be independent of $\nabla \log \pi$. This is the case for ZigZag, but not for any of the known $O(\sqrt{d})$ PDMP samplers.

\paragraph{Relaxing the invariance: } Current PDMP samplers require the target distribution to be invariant for all $t \in \mathbb{R}$. This is a much stronger constraint than for traditional MCMC, where the invariance is only required at discrete times. For the author, this is the fundamental reason why PDMP samplers equipped with the natural structure described above scale as $O(\sqrt{d})$, while MCMC samplers using a similar setting (that is, a product form for $\rho$, such as HMC) have better scaling. Therefore, relaxing this assumption may lead to better scaling. The next section is dedicated to building an algorithm based on this principle, yielding a sub-$\sqrt{d}$ empirical scaling for Gaussian-tailed targets.
\section{A PDMP algorithm breaking the $O(\sqrt{d})$ barrier}
\label{sec:algorithm}

This section puts into practice the idea of relaxing the continuous-time invariance assumption discussed previously and yields an algorithm that breaks the $\mathcal{O}(\sqrt{d})$ complexity barrier. Furthermore, while the algorithm is heavily inspired by HMC, it is naturally locally adaptive in what is the equivalent of the step size. 

Our goal is to draw samples from a target probability measure $\pi(x) \propto \exp(-U(x))$ defined on a state space $\mathcal{X} = \mathbb{R}^d$. We augment the state space with a velocity variable $v \in \mathcal{V} = \mathbb{R}^d$ and define the augmented target measure $\mu(x, v) \propto \exp(-H(x, v))$, where $H(x, v) = U(x) + \frac{1}{2}\|v\|^2$ is the Hamiltonian.

Consider the leapfrog integration steps used in Hamiltonian Monte Carlo (HMC). A leapfrog step of size $\epsilon$ transitions a state $(x_0, v_0)$ to $(x_1, v_1)$ via an intermediate velocity $v_{1/2}$:
\begin{align*}
    v_{1/2} &= v_0 - \frac{\epsilon}{2} \nabla U(x_0), \\
    x_1 &= x_0 + \epsilon v_{1/2}, \\
    v_1 &= v_{1/2} - \frac{\epsilon}{2} \nabla U(x_1).
\end{align*}
If we view this as a continuous trajectory, linking $(x_0, v_0)$ to $(x_0, v_{1/2})$, then to $(x_1, v_{1/2})$, and finally to $(x_1, v_1)$,the Hamiltonian is not conserved along the path. The energy oscillates, but returns to a value close to the initial energy at the end of the step. Figure \ref{fig:leapfrog_oscillation} illustrates this for a 1D target.

\begin{figure}[htbp]
    \centering
    \includegraphics[width=\textwidth]{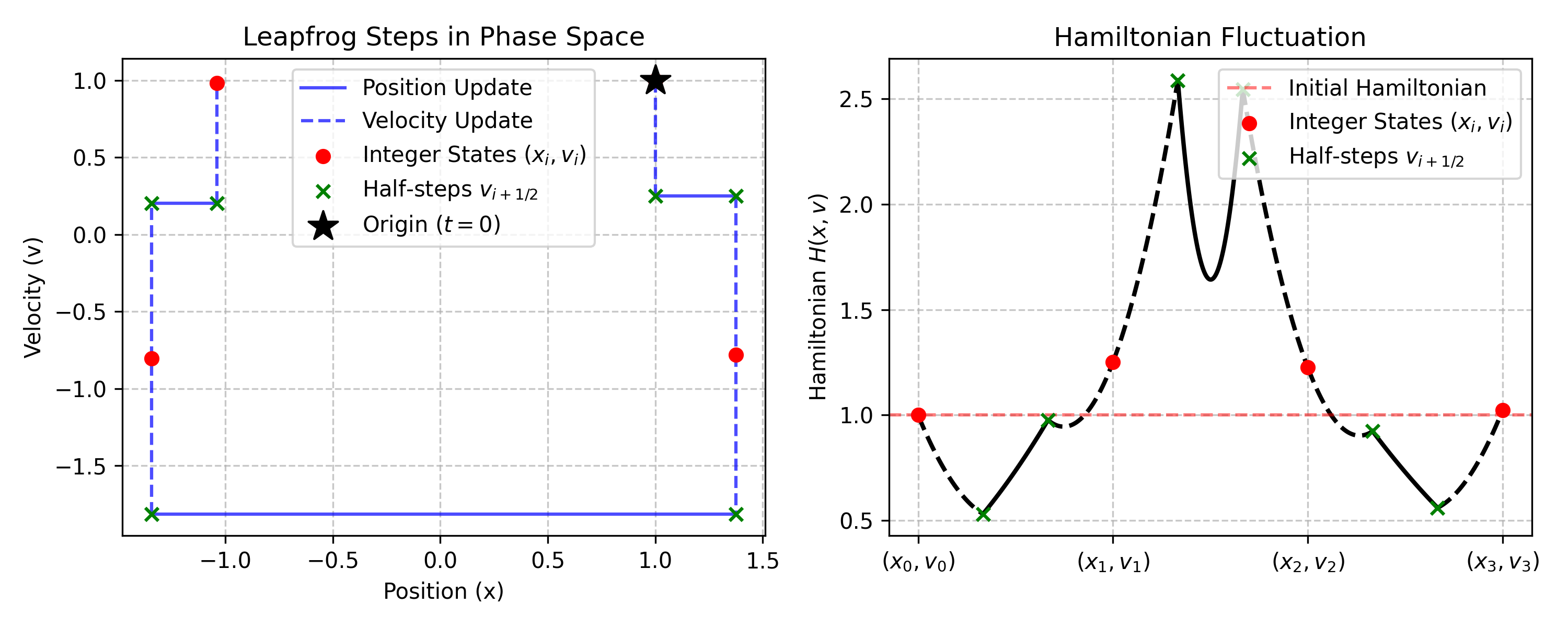}
    \caption{Three leapfrog steps in phase space for a 1D target. The discrete transitions are interpolated linearly (Left), showing significant fluctuations in $H(x, v)$ along the continuous path (Right), despite the energy being approximately conserved between integer states.}
    \label{fig:leapfrog_oscillation}
\end{figure}

Our algorithm uses a similar principle, but transitions between position and velocity updates at random continuous times rather than fixed intervals. While the resulting PDMP does not strictly preserve the target measure $\mu$ along its continuous trajectory, we recover the exact distribution through path re-weighting. Specifically, we construct a probability measure over the trajectory, assigning higher weights when the Hamiltonian is close to its initial value, in such a way that sampling a state directly from it recovers the target distribution. 

This section is organized as follows. Subsections \ref{sec:aux_measure} and \ref{sec:localized_dynamics} construct the underlying dynamics of the process. Subsection \ref{sec:sampling_w_t} introduces the path-resampling measure and details how to sample from it in practice. Finally, Subsection \ref{sec:algo_summary} provides a summary of the algorithm.

\begin{remark}
    We could theoretically build a PDMP sampler directly from the leapfrog integrator to essentially recover standard HMC. By adding a clock variable that increases at rate $1$, we can introduce boundary events every time the clock reaches $\epsilon$. These events simply reset the clock to $0$ and switch the continuous flow between position and velocity updates, as illustrated in Figure \ref{fig:leapfrog_oscillation}. Naturally, this purely deterministic PDMP does not leave the target measure invariant. However, if we simulate the process over an interval $[0, T]$, we can recover the correct distribution by focusing exclusively on the discrete integration times $k\epsilon$. We achieve this by defining a probability measure $w$ on $[0, T]$ proportional to $\sum_{k\epsilon \le T} a_k \delta_{k\epsilon}$, where the weights $a_k$ are the usual HMC Metropolis acceptance ratios, and then sampling a state according to $w$.
\end{remark}

\subsection{The Auxiliary Target Measure}
\label{sec:aux_measure}

Given an initial state $(x_0, v_0)$ with energy $H_0 = H(x_0, v_0)$, we define a localized auxiliary target density concentrated around the initial energy surface:
\begin{equation}
    \label{eq:aux_target}
    \tilde{\mu}_{H_0}(x, v) \propto \exp\left(-\frac{|H(x, v) - H_0|}{2\sigma}\right),
\end{equation}
where $\sigma > 0$ is a scale parameter. Because $\tilde{\mu}_{H_0}$ depends on $H_0$, the specific PDMP simulated changes at every iteration.

For an i.i.d. target in dimension $d$, the variance of the natural Hamiltonian $H(x, v)$ scales as $\mathcal{O}(d)$. Since the variance of $H$ under $\tilde{\mu}_{H_0}$ is proportional to $\sigma^2$, we set $\sigma \propto \sqrt{d}$ to properly match the target variance.

\subsection{The Localized Dynamics}
\label{sec:localized_dynamics}

We extend the phase space to include a flow index $i \in \{1, 2\}$, giving $E = \mathcal{X} \times \mathcal{V} \times \{1, 2\}$. The true target measure is $\mu(x, v, i) = \frac{1}{2}\mu(x, v)$. 

The continuous dynamics are governed by two alternating vector fields:
\begin{itemize}
    \item $\phi(x,v,1) = (v,0,0)$ (position update),
    \item $\phi(x,v,2) = (0, -\nabla U(x),0)$ (velocity update).
\end{itemize}
A deterministic jump map $F$ flips the flow index ($1 \leftrightarrow 2$) without changing the state. To admit $\tilde{\mu}_{H_0}$ as its invariant measure, the process requires the following forward event rates\footnote{The gradient $\nabla \tilde{\mu}_{H_0}$ is undefined exactly at $H(x,v) = H_0$. We define the event rates to be zero at this singularity.}:
\begin{align}
    \lambda^{fwd}(x, v,1) &= \max\left(0, \frac{1}{2\sigma} \text{sgn}\big(H(x, v) - H_0\big) \big(v \cdot \nabla U(x)\big)\right), \\
    \lambda^{fwd}(x, v,2) &= \max\left(0, \frac{1}{2\sigma} \text{sgn}\big(H(x, v) - H_0\big) \big(-v \cdot \nabla U(x)\big)\right).
\end{align}

\begin{proposition}
    The PDMP defined by these vector fields and event rates admits $\tilde{\mu}_{H_0}$ as its invariant measure.
\end{proposition}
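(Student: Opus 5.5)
We verify the stationarity equation $L^*\tilde{\mu}_{H_0}=0$ of Equation~\eqref{eq:adjoint-generator} directly on the extended space $E=\mathcal{X}\times\mathcal{V}\times\{1,2\}$. Here the candidate invariant density is $\tilde{\mu}_{H_0}(x,v,i)=\frac12\tilde{\mu}_{H_0}(x,v)$, and we assume $\exp(-|H-H_0|/(2\sigma))$ is integrable so that this is a probability measure. The proof has three steps: compute the transport term, compute the jump terms, and check that they cancel index by index.

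\textbf{Transport term.} Both vector fields are divergence free. The field $(v,0)$ does not depend on $x$, and the field $(0,-\nabla U(x))$ does not depend on $v$. Hence
\[
\mathrm{div}(\phi\tilde{\mu}_{H_0})=\phi\cdot\nabla\tilde{\mu}_{H_0}=\tilde{\mu}_{H_0}\,\phi\cdot\nabla\log\tilde{\mu}_{H_0}.
\]
Away from the level set $\{H=H_0\}$ we have
\[
\nabla\log\tilde{\mu}_{H_0}=-\frac{1}{2\sigma}\mathrm{sgn}(H-H_0)\,(\nabla U(x),v).
\]
This gives
\[
g_1:=-\frac{\mathrm{div}(\phi\tilde{\mu}_{H_0})}{\tilde{\mu}_{H_0}}(x,v,1)=\frac{1}{2\sigma}\mathrm{sgn}(H-H_0)\,v\cdot\nabla U(x),
\]
and, for $i=2$, the analogous quantity is $g_2=-g_1$. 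This is consistent with Lemma~\ref{lemma:zero-mean-div}, and it reflects the fact that each flow changes $H$ at rate $\pm v\cdot\nabla U$.

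\textbf{Jump terms.} The jump map $F$ is a deterministic involution that only flips $i$, and $\tilde{\mu}_{H_0}(x,v,1)=\tilde{\mu}_{H_0}(x,v,2)$. The only mass entering $(x,v,i)$ therefore comes from $(x,v,3-i)$. The integral term of Equation~\eqref{eq:adjoint-generator} thus equals $\lambda^{fwd}(x,v,3-i)\,\tilde{\mu}_{H_0}(x,v,i)$. After dividing by $\tilde{\mu}_{H_0}$, stationarity at $(x,v,i)$ reduces to the pointwise identity
\[
\lambda^{fwd}(x,v,i)-\lambda^{fwd}(x,v,3-i)=g_i .
\]
By definition $\lambda^{fwd}(\cdot,1)=\max(0,g_1)$ and $\lambda^{fwd}(\cdot,2)=\max(0,-g_1)$. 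The identity then follows from $\max(0,a)-\max(0,-a)=a$, applied with $a=g_1$ for $i=1$ and $a=-g_1$ for $i=2$.

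\textbf{Main obstacle.} The delicate point is rigor at the singular set $\{H=H_0\}$, where $\tilde{\mu}_{H_0}$ is not differentiable. My plan is to argue in the weak sense. Since $|\cdot|$ is Lipschitz and $H$ is $C^1$ (assuming $U\in C^1$), $\tilde{\mu}_{H_0}$ is locally Lipschitz. Its weak gradient therefore coincides almost everywhere with the formula above. The set $\{H=H_0\}$ is Lebesgue-null wherever $\nabla H\neq0$, and the choice of zero rate there does not affect any integral. Concretely, I would verify $\int L f\,d\tilde{\mu}_{H_0}=0$ for smooth compactly supported test functions $f$. Integrating the transport part by parts is legitimate for Lipschitz densities, and this reduces the claim to the almost-everywhere pointwise identity established above.
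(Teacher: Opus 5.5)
Your proposal is correct. Checking $L^*\tilde{\mu}_{H_0}=0$ from Equation~\ref{eq:adjoint-generator} index by index, using that both fields are divergence-free, that $g_2=-g_1$, and that $\max(0,a)-\max(0,-a)=a$, is exactly the verification the paper's framework calls for; the same stationary equation drives Proposition~\ref{prop:lower-bound}. Your explicit integrability assumption and your weak-sense treatment of $\{H=H_0\}$ are more careful than the paper, which only mentions the singularity in a footnote.

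Note that the appendix the paper cites contains no proof of this proposition, so your argument supplies the missing proof rather than paraphrasing an existing one. One small tidy-up: $\{H=H_0\}$ is Lebesgue-null everywhere, not just where $\nabla H\neq 0$, because for each fixed $x$ it is a sphere, a point, or empty in $v$.
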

\begin{proof}
    See Appendix \ref{app:proofs}.
\end{proof}

This formulation adapts its event rates locally based on the gradient and velocity, leading to a scale invariance property:
\begin{proposition}[Scale Invariance]
    \label{prop:scale-invariance}
    If the target density is scaled as $\tilde{\pi}(x) = \pi(\alpha x)$ for $\alpha > 0$, the generated trajectories $\tilde{Y}(t) = (\tilde{x}(t), \tilde{v}(t))$ match the original trajectories via the rescaling $\tilde{x}(t) = \frac{1}{\alpha} x(\alpha t)$ and $\tilde{v}(t) = v(\alpha t)$.
\end{proposition}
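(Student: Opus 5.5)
The plan is to construct an explicit coupling. Given the original process $(x(t),v(t),i(t))$ started from $(x_0,v_0,i_0)$, define $\tilde x(t)=\frac1\alpha x(\alpha t)$, $\tilde v(t)=v(\alpha t)$, $\tilde\imath(t)=i(\alpha t)$. We then show that this rescaled path has exactly the law of the PDMP of Section \ref{sec:localized_dynamics} built for $\tilde\pi$ and started from $(x_0/\alpha,v_0,i_0)$. Three ingredients must be checked: (i) the deterministic flows are mapped onto each other, (ii) the event intensities transform as $\tilde\lambda(t)=\alpha\,\lambda(\alpha t)$, which is exactly the factor produced by the time change $t\mapsto\alpha t$, and (iii) the jump map $F$ and the auxiliary quantities $H_0,\sigma$ are compatible with the rescaling.

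First I record the elementary identities. Writing $\tilde U(y)=U(\alpha y)$ (additive constants are irrelevant), we have $\nabla\tilde U(y)=\alpha\nabla U(\alpha y)$ and $\tilde H(y,w)=H(\alpha y,w)$. Since $\tilde x(0)=x_0/\alpha$ and $\tilde v(0)=v_0$, the reference energy satisfies $\tilde H_0=H_0$. The parameter $\sigma$ depends only on $d$, so it is unchanged. Along the coupled paths one therefore has $\tilde H(\tilde x(t),\tilde v(t))=H(x(\alpha t),v(\alpha t))$, and the sign term $\mathrm{sgn}(\tilde H-\tilde H_0)$ coincides with the original one evaluated at time $\alpha t$.

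Next I check the flows on each inter-event interval. For flow $1$, $\frac{d}{dt}\tilde x(t)=x'(\alpha t)=v(\alpha t)=\tilde v(t)$ and $\tilde v$ is constant. For flow $2$, $\tilde x$ is constant and
\[
\frac{d}{dt}\tilde v(t)=\alpha v'(\alpha t)=-\alpha\nabla U(x(\alpha t))=-\alpha\nabla U(\alpha\tilde x(t))=-\nabla\tilde U(\tilde x(t)).
\]
For the rates, $\tilde v\cdot\nabla\tilde U(\tilde x)=\alpha\, v(\alpha t)\cdot\nabla U(x(\alpha t))$. Together with the equality of the sign terms, this gives $\tilde\lambda^{fwd}(\tilde x(t),\tilde v(t),j)=\alpha\,\lambda^{fwd}(x(\alpha t),v(\alpha t),j)$ for $j=1,2$. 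The jump map $F$ only flips the index, so it commutes with the rescaling.

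The step requiring the most care is the probabilistic one: showing that the event times of the rescaled path have the correct law, rather than merely that the paths match pathwise. I would handle it through the path density \eqref{eq:proba-path}. If $\tau_1<\dots<\tau_k$ are the original event times, set $s_j=\tau_j/\alpha$. The substitution $u=\alpha s$ gives $\int_{s_{j-1}}^{s_j}\tilde\lambda\,ds=\int_{\tau_{j-1}}^{\tau_j}\lambda\,du$, so the exponential factors agree. The product of intensities picks up a factor $\alpha^k$, which exactly cancels the Jacobian $\alpha^{-k}$ of $(\tau_j)\mapsto(\tau_j/\alpha)$. Hence the rescaled event times on $[0,T/\alpha]$ have the density of the $\tilde\pi$-process. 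Equivalently, one can couple the two processes using the same unit exponential clocks via the inverse-transform construction. The same argument applies to the reverse process, since its rates $\lambda^r=\lambda\circ F^{-1}$ transform identically.

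Finally, if the statement is meant to cover the full sampler (the NUTS criterion and the resampling weights), I would add the following checks. Every quantity $(x_{t_i}-x_{t_j})\cdot v$ in the No-U-Turn criterion is multiplied by $1/\alpha>0$, so its sign is preserved and the stopping times are rescaled by $1/\alpha$. The weights $w(t)$ involve only $\tilde H=H$, the path densities just shown to match (up to the constant Jacobian, which disappears after normalisation), and $\nu$, which is scale-free in $t/T$. The selected state therefore corresponds under the same map.
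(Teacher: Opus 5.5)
Your proposal is correct and follows the same route as the paper's proof. Both arguments check that the two flows map onto each other and that the Hamiltonian along the rescaled path equals $H$ at time $\alpha t$. Both then observe that the rates pick up a factor $\alpha$, so the integrated intensity, and hence the exponential clocks, are unchanged. Your explicit checks of $\tilde H_0=H_0$, of $\sigma$, and of the $\alpha^k$ versus Jacobian bookkeeping in the path density make the distributional claim more rigorous than the paper's sketch.

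One caveat concerns only your optional last paragraph. In the implemented sampler the trajectories and weights use the approximated rates, so extending the argument to the full algorithm also requires the (adaptive) rate-approximation scheme to be scale-covariant. Neither you nor the paper checks this.
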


To simulate trajectories backward in time, we negate the vector field ($\phi^{bwd} = -\phi$). The resulting backward switching rates are the forward rates with swapped flow indices: $\lambda^{bwd}(x, v,1) = \lambda^{fwd}(x,v,2)$ and $\lambda^{bwd}(x, v,2) = \lambda^{fwd}(x,v,1)$.

\paragraph{Simulation of the PDMP}
Because the exact continuous-time rates $\lambda^{fwd}$ and $\lambda^{bwd}$ depend continuously on the spatial gradient and the Hamiltonian, simulating the exact event times analytically is generally intractable. To resolve this, we employ the Metropolised PDMP methodology detailed in Section \ref{sec:metropolised_pdmp}. We construct the simulated trajectory using a tractable numerical approximation $\tilde{\lambda}$ of the true rates. A straightforward and computationally efficient choice is a piecewise-constant approximation, where the event rates are evaluated at discrete intervals and held constant between them. Furthermore, the step size of this rate approximation can be made adaptive.

\subsection{Trajectory Generation and Path Weighting}
\label{sec:sampling_w_t}

To generate a continuous trajectory, we start from our current state $(x_0, v_0)$ and sample a uniform time fraction $\alpha \sim \mathcal{U}(0, 1)$. We simulate the PDMP dynamics forward in time for a duration $(1-\alpha)T$ and backward for $\alpha T$. The total trajectory length $T$ is determined dynamically by the NUTS stopping criterion (see Section \ref{sec:nuts_pdmp}).
This procedure generates a trajectory 
\[
    Y = (X_{t-\alpha T},V_{t-\alpha T},i_{t-\alpha T})_{t\in [0,T]}
\]
parameterized on the interval $[0, T]$. By construction, our initial state is located at $Y_l$, where $l = \alpha T$. 

The core theoretical foundation of the algorithm relies on the conditional distribution of this initial starting time $l$ given the generated trajectory $Y$. If we compute the conditional probability density $P(l=t \mid Y)$ and draw a new index $l^*$ from this distribution, the new state $Y_{l^*}$ is guaranteed to have the exact same marginal distribution as the initial state $Y_l$. Since the initial state is assumed to be distributed according to the target $\mu$, the resampled state $Y_{l^*}$ is also an exact sample from $\mu$.

Therefore, our true selection weight $w_t$ must be exactly this conditional density: $w_t \propto P(l=t \mid Y)$. Following the Metropolized PDMP framework (Section \ref{sec:metropolised_pdmp}), evaluating this joint density yields:
\begin{equation}
    \label{eq:wt_joint}
    w_t \propto \mu(Y_t) \tilde{p}_{W_{[t,T]}}(Y_{[t,T]}) \tilde{p}^r_{W_{[0,t]}}\big(R(Y_{[0,t]})\big) \nu(t),
\end{equation}
where, $\tilde{p}_{W}$ and $\tilde{p}^r_{W}$ are the path probabilities of generating the forward and backward segments under the \textit{approximated} numerical event rates used during the simulation, $R$ is the time-reversal operator, and $\nu(t)$ is the volume factor induced by the stopping criterion.

To efficiently sample the new index $l^*$, we use a Metropolis-Hastings (MH) step over the trajectory $[0,T]$. This is possible because evaluating Equation \ref{eq:wt_joint} can be done exactly. However, sampling such a probability distribution using a naive proposal would incure a very low acceptance rate, and therefore a high number of evaluation of Equation \ref{eq:wt_joint} to get an accepted sample. This would be prohibitively expensive as evaluating the true weight $w_t$ point-wise for candidate times requires rebuilding the entire numerical rate approximation anchored from the shifted starting point $Y_t$. Therefore we require a computationally cheap proposal density $q_Y(t)$ that closely approximates the true $w_t$.

To build $q_Y(t)$, we derive an analytical surrogate for the weight, denoted $w_t^{exact}$. This surrogate is computed under the idealized assumption that the trajectory was generated using the \textit{exact} continuous-time rates rather than the numerical approximations. Thanks to our targeted auxiliary measure $\tilde{\mu}_{H_0}$, this idealized conditional density simplifies into an explicit formula.

\begin{proposition}
    \label{prop:w_t}
    For a trajectory $Y$ of duration $T$, if the localized PDMP is simulated using the exact continuous-time rates, the idealized selection weight $w_t^{exact}$ for a time $t \in [0,T]$ is analytically given by:
    \begin{equation}
        \label{eq:w_t_exact}
        w_t^{exact} \propto \mu(Y_t) \nu(t) \exp\left( - \frac{|H_{end} - H(Y_t)| + |H_{start} - H(Y_t)|}{4\sigma} \right) \mathbb{I}\big(H_{min} < H(Y_t) < H_{max}\big),
    \end{equation}
    where $H_{start} = H(Y_0)$ and $H_{end} = H(Y_T)$ are the Hamiltonian values at the backward and forward extremities of $Y$. 
    
    The bounds $H_{min}$ and $H_{max}$ are the empirical local extrema of the energy along the discrete jump path:
    \begin{align}
        H_{max} &= \min \left( \min_{k \in K_{fwd}: \dot{H}(\tau_k^-) > 0} H(\tau_k), \min_{k \in K_{bwd}: \dot{H}(\tau_k^+) < 0} H(\tau_k) \right), \\
        H_{min} &= \max \left( \max_{k \in K_{fwd}: \dot{H}(\tau_k^-) < 0} H(\tau_k), \max_{k \in K_{bwd}: \dot{H}(\tau_k^+) > 0} H(\tau_k) \right),
    \end{align}
    where $K_{fwd}$ and $K_{bwd}$ are the sets of forward and backward event indices, and $\dot{H}(\tau_k^\pm)$ denotes the time derivative of the Hamiltonian immediately before or after the jump.
\end{proposition}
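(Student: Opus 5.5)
Start from the general weight formula (Equation~\ref{eq:wt_joint}) with the exact rates substituted for $\tilde{\lambda}$. The goal is to compute the two path densities $p_{W_{[t,T]}}(Y_{[t,T]}\mid Y_t)$ and $p^r_{W_{[0,t]}}(R(Y_{[0,t]})\mid Y_t)$ in closed form. The key observation is that the exact forward rate is the positive part of the rate of increase of $-\log\tilde{\mu}_{H_0}$ along the current flow. On flow $1$ we have $\dot H = v\cdot\nabla U$; on flow $2$ we have $\dot H = -v\cdot\nabla U$. Hence
\[
\lambda^{fwd}(Y_s)=\frac{1}{2\sigma}\max\bigl(0,\tfrac{d}{ds}|H(Y_s)-H_0|\bigr),
\]
and the backward rate is the analogous positive part for the reversed flow. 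One subtlety matters here. When the path density is computed from a starting point $Y_t$, the anchor energy is $H(Y_t)$, not the original $H_0$: the PDMP started at $Y_t$ targets $\tilde{\mu}_{H(Y_t)}$. So in $w^{exact}_t$ the function $|H-H(Y_t)|$ replaces $|H-H_0|$.

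Step one is the survival factor. Along the forward segment $[t,T]$ the energy $H(Y_s)$ is continuous, since the jumps only flip $i$. So
\[
\exp\Bigl(-\int_t^T\lambda\,ds\Bigr)=\exp\Bigl(-\tfrac{1}{2\sigma}\int_t^T\bigl(\tfrac{d}{ds}|H_s-H(Y_t)|\bigr)^+ds\Bigr).
\]
If the energy stays on one side of $H(Y_t)$ and the positive variation equals the net change, this collapses to $\exp(-|H_{end}-H(Y_t)|/(2\sigma))$. It does so up to a term that combines with the corresponding reverse-segment factor and the target density, giving the stated $4\sigma$ exponent. I would verify the constant by writing $|x|=x^++x^-$ and using the symmetric treatment of the two half-paths. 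The same computation on $[0,t]$ with the reverse process gives the factor in $|H_{start}-H(Y_t)|$.

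Step two treats the jump-intensity factors $\lambda(Y_{\tau_k^-})$ at the observed events. Each one equals $\frac{1}{2\sigma}|\dot H(\tau_k^-)|$ times an indicator. That indicator is nonzero exactly when $\mathrm{sgn}(H(\tau_k)-H(Y_t))\,\dot H(\tau_k^-)>0$. Now I would do a case analysis. At a forward event with $\dot H(\tau_k^-)>0$ the indicator requires $H(Y_t)<H(\tau_k)$. At one with $\dot H(\tau_k^-)<0$ it requires $H(Y_t)>H(\tau_k)$. Backward events are handled the same way with $\dot H(\tau_k^+)$ and reversed signs. Taking the intersection over all events gives exactly $H_{min}<H(Y_t)<H_{max}$. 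The magnitudes $|\dot H(\tau_k^\pm)|/(2\sigma)$ do not depend on $t$, so they drop out into the proportionality constant.

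The main obstacle is step one. I must show rigorously that the positive-variation integrals telescope to the absolute differences whenever the indicator holds, which means checking that between events the energy is monotone in the right direction relative to $H(Y_t)$. I expect to argue this as follows. No event occurred while the rate was positive, with probability $1$ on the trajectory's support. This, together with the indicator constraints, forces the monotonicity, and any remaining non-telescoping contribution is independent of $t$. Finally, I multiply by $\mu(Y_t)\nu(t)$, which the formula keeps unchanged.
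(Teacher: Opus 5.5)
Your overall structure (Bayes' formula for the shift, anchoring the rates at $H_y := H(Y_t)$ rather than $H_0$, survival factor times jump intensities) matches the paper's proof, and so does your Step two. Step one, however, has a genuine gap, and it is exactly where the $4\sigma$ and the $t$-independence come from.

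\textbf{Monotonicity is neither true nor forced.} Your plan needs $|H_s-H_y|$ to be monotone between events. You justify this by saying ``no event occurred while the rate was positive, with probability $1$,'' but that claim is false. An exponential clock with positive intensity can fail to ring over an interval, and the survival factor $\exp(-\int\lambda)$ is precisely the probability of that happening. The indicator constraints do not force monotonicity either, because they only fix the sign of $\mathrm{sgn}(H-H_y)\dot H$ at the event times themselves.

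A simple counterexample shows the approach fails. Take no events on $[t,T]$, so the indicator is vacuous, and let $H$ rise by $a$ above $H_y$ and then return. Your survival exponent is then $a/(2\sigma)$, while $|H_{end}-H_y|=0$.

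\textbf{Your account of the constant is also wrong.} Under monotonicity you obtain $2\sigma$. Neither the reverse segment nor $\mu(Y_t)$, which appears unchanged in the final formula, can turn that into $4\sigma$.

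\textbf{The missing idea is a pointwise identity that needs no monotonicity.} Set $g(s)=|H_s-H_y|$. Then $g'=\mathrm{sgn}(H_s-H_y)\dot H_s$ and $|g'|=|\dot H_s|$ almost everywhere, so $(g')^+=\tfrac12\bigl(g'+|\dot H_s|\bigr)$. Integrating gives
\[
\frac{1}{2\sigma}\int_t^T (g')^+\,ds=\frac{|H_{end}-H_y|}{4\sigma}+\frac{1}{4\sigma}\int_t^T|\dot H_s|\,ds .
\]
The backward half on $[0,t]$, written in reversed time, gives in the same way
\[
\frac{|H_{start}-H_y|}{4\sigma}+\frac{1}{4\sigma}\int_0^t|\dot H_s|\,ds .
\]
The two total-variation pieces add up to $V(Y)/(4\sigma)$. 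This depends only on the path, not on $t$, so it is absorbed into the proportionality constant.

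This is what the paper does, via $\max(0,x)=\tfrac12(x+|x|)$. Your remarks about ``writing $|x|=x^++x^-$'' and a ``$t$-independent remaining contribution'' point toward it. As written, though, the monotonicity argument you propose to carry out would not go through.
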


\begin{figure}[htbp]
    \centering
    \includegraphics[width=1.1\textwidth]{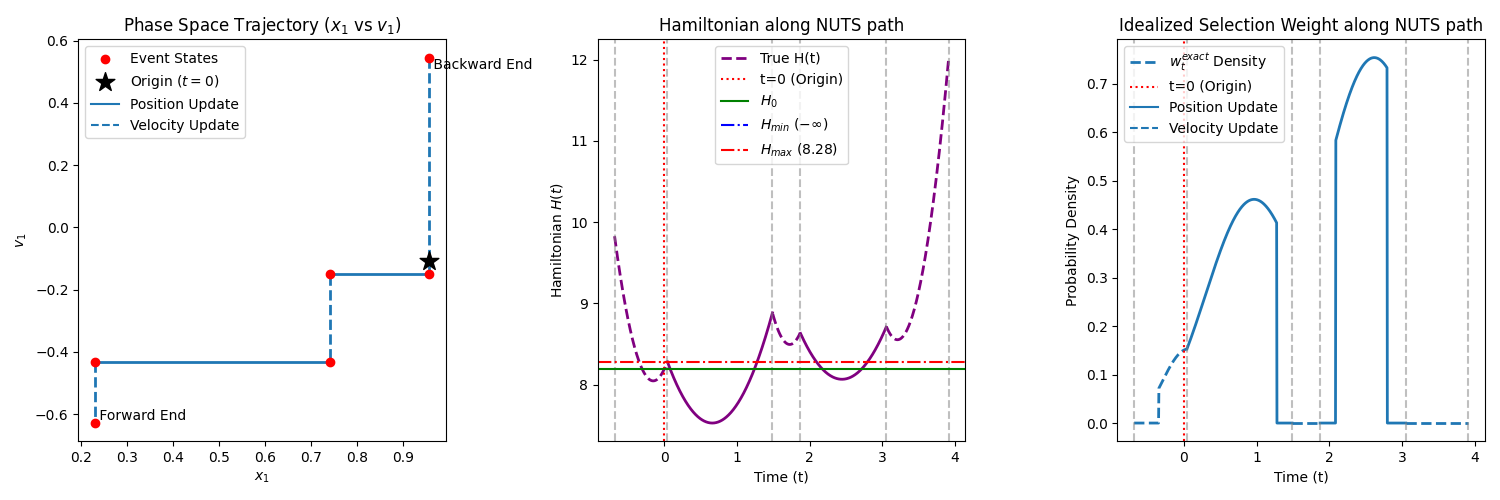}
    \caption{Example trajectory for a 1D target. The idealized weight $w_t^{exact}$ is highest when the Hamiltonian is close to $H_0$. Density increases from left to right due to $\nu(t)$.}
    \label{fig:exact_w_t}
\end{figure}

Figure \ref{fig:exact_w_t} illustrates the computation of $w_t^{exact}$. Because evaluating $w_t^{exact}$ continuously still depends on the exact Hamiltonian, we form the final proposal density $q_Y(t)$ by building a piecewise-quadratic interpolation of $H$ between the discrete event times and substituting this surrogate directly into Equation \ref{eq:w_t_exact}. 
Furthermore, to ensure robust MH exploration, the final proposal $q_Y(t)$ is a mixture comprising $90\%$ of this targeted quadratic approximation and $10\%$ of the baseline density $\nu$  over $[0,T]$ (see Figure \ref{fig:proposal_w_t}).

\begin{figure}[htbp]
    \centering
    \includegraphics[width=1.1\textwidth]{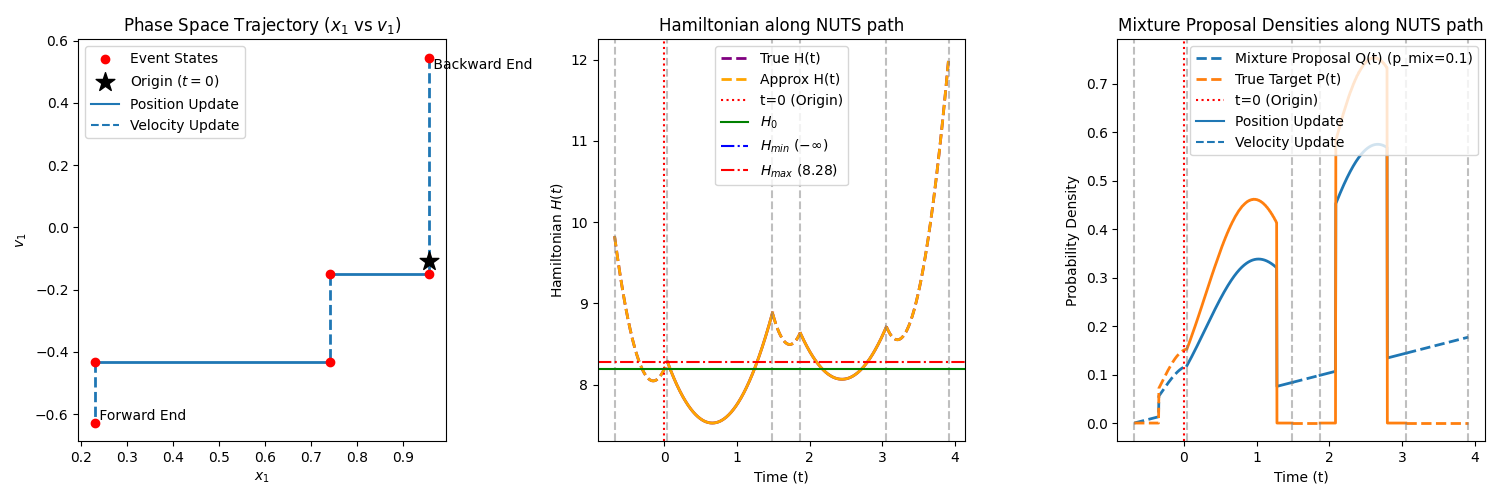}
    \caption{The idealized weight $w_t^{exact}$ (orange) and the corresponding MH proposal distribution $q_Y$ (blue) for a 1D target.}
    \label{fig:proposal_w_t}
\end{figure}

\subsection{Algorithm Summary}
\label{sec:algo_summary}

The full sampling procedure for a single iteration is detailed in Algorithm \ref{alg:pdmp_iteration}.

\begin{algorithm}[htbp]
\caption{Single Iteration of the Metropolized Oscillating PDMP}
\label{alg:pdmp_iteration}
\textbf{Input:} Current state $(x_0, v_0)$ \\
\textbf{Output:} Next state $(x_{next}, v_{next})$
\begin{enumerate}
    \item \textbf{Initialize:} Compute $H_0 = H(x_0, v_0)$ and define $\tilde{\mu}_{H_0}$ (scaling $\sigma$ with $\sqrt{d/2}$).
    \item \textbf{Sample Time Fraction:} Draw $\alpha \sim \mathcal{U}(0, 1)$.
    \item \textbf{Simulate Trajectory:} Simulate the PDMP with approximated event rates forward for $(1-\alpha)T$ and backward for $\alpha T$. The length $T$ is set by the NUTS stopping criterion. Let $Y$ be the generated trajectory on $[0, T]$.
    \item \textbf{Build Proposal:} Compute the idealized weights $w_t^{exact}$ along $Y$ using Proposition \ref{prop:w_t} to construct the mixture proposal density $q_Y(t)$.
    \item \textbf{Resample State:} Use Metropolis-Hastings with proposal $q_Y(t)$ to sample a new time index $t^* \in [0, T]$, targeting the true selection weight $w_t$.
    \item \textbf{Return:} Output $Y_{t^*}$ as the next state.
\end{enumerate}
\end{algorithm}

\section{Empirical results}
\label{sec:empirical_results}

The primary objective of these empirical experiments is to analyze the scaling behavior of the proposed sampler with respect to the dimension $d$ of the target distribution, and show that it breaks the $\Omega(\sqrt{d})$ scaling limite of traditional PDMP samplers for i.i.d. targets shown in section \ref{sec:complexity_bound}. Specifically, we investigate how the computational effort, measured both by the number of simulated events and the total number of gradient evaluations required to obtain one effective sample (i.e., per ESS), scales as $d$ increases. We restrict our analysis to product measures corresponding to i.i.d. targets.

\paragraph{Targets}
Since our algorithm is inspired by HMC, we expect it to exhibit similar behavior. For HMC, Gaussian-tailed targets represent the optimal setting, while lighter or heavier tails can be highly problematic. Therefore, we do not expect the algorithm to break the $\mathcal{O}(d^{1/2})$ barrier for non-Gaussian targets, but we include them to explicitly test the algorithm's robustness. Most notably, we expect our method to perform reliably on light-tailed distributions where standard HMC would simply fail because it allows for adaptive step sizes. To this end, we chose five targets, categorizing them into three with Gaussian-like tails and two with non-Gaussian tails (specifically, one heavier-tailed and one lighter-tailed target).

\subparagraph{Gaussian-Tailed Targets}
These targets serve as our well-behaved benchmarks, dominated by Gaussian tails where standard HMC performs optimally.
\begin{enumerate}
    \item \textbf{Gaussian}: A standard isotropic Gaussian distribution. The target density is given by
    \begin{equation*}
        \pi(x) \propto \exp\left(-\frac{1}{2} \sum_{i=1}^d x_i^2\right).
    \end{equation*}
    
    \item \textbf{Bayesian Logistic Regression (BLR)}: A target representing a Bayesian logistic regression posterior with independent standard normal priors, simplified by assuming all observed labels are equal to 1. The target density is
    \begin{equation*}
        \pi(x) \propto \prod_{i=1}^d \frac{1}{1 + \exp(-x_i)} \exp\left(-\frac{1}{2} x_i^2\right).
    \end{equation*}

    \item \textbf{Gaussian Mixture}: A multimodal target formed by an equal-weighted mixture of two $d$-dimensional standard isotropic Gaussian distributions centered at $\mu_1 = (-2, 0, \dots, 0)$ and $\mu_2 = (2, 0, \dots, 0)$. The target density is
    \begin{equation*}
        \pi(x) \propto \left( \exp\left(-\frac{1}{2} (x_1+2)^2\right) + \exp\left(-\frac{1}{2} (x_1-2)^2\right) \right) \prod_{i=2}^d \exp\left(-\frac{1}{2} x_i^2\right).
    \end{equation*}
\end{enumerate}

\subparagraph{Non-Gaussian-Tailed Targets}
These targets present specific exploration challenges due to their tail decay rates, designed to test the algorithm's stability where fixed-step-size HMC typically struggles.
\begin{enumerate}
    \item \textbf{Logistic (Heavier tails)}: A distribution with heavier tails than the Gaussian, constructed as a product of independent standard logistic distributions. This heavier tail behavior makes exploration challenging for standard HMC. The target density is
    \begin{equation*}
        \pi(x) = \prod_{i=1}^d \frac{e^{-x_i}}{(1 + e^{-x_i})^2}.
    \end{equation*}
    
    \item \textbf{Quartic / Sub-Gaussian (Lighter tails)}: A light-tailed target with a potential proportional to $-0.25 x^4 - 0.5 x^2$. HMC without step-size adaptivity routinely fails here, as the rapidly increasing gradient easily leads to numerical instabilities. The target density is
    \begin{equation*}
        \pi(x) \propto \exp\left(-\sum_{i=1}^d \left(\frac{1}{4} x_i^4 + \frac{1}{2} x_i^2\right)\right).
    \end{equation*}
\end{enumerate}

\paragraph{Empirical Setup}
For each target, we sample in dimensions $d \in \{2, 4, 8, 16, 32, 64, 128\}$. The scale parameter is dynamically set as $\sigma = \sqrt{d/2}$. The numerical approximation of the event rates employs our adaptive approximation method. We draw 10,000 samples for each configuration and quantify the number of independent samples generated by evaluating the Effective Sample Size (ESS) on the first coordinate $x_1$. The ESS computations are performed using the ArviZ library \cite{Martin2026}.

\begin{table}[htbp]
    \centering
    \begin{tabular}{lccl}
        \hline
        \textbf{Target} & \textbf{Number of Events} & \textbf{Number of Gradient Evaluations} & \textbf{Tail Behavior} \\
        \hline
        Gaussian & $\mathcal{O}(d^{0.22})$ & $\mathcal{O}(d^{0.26})$ & \multirow{3}{*}{Gaussian} \\
        BLR & $\mathcal{O}(d^{0.24})$ & $\mathcal{O}(d^{0.29})$ & \\
        Gaussian Mixture & $\mathcal{O}(d^{0.29})$ & $\mathcal{O}(d^{0.31})$ & \\
        \hline
        Logistic & $\mathcal{O}(d^{0.43})$ & $\mathcal{O}(d^{0.48})$ & Heavier \\
        Quartic & $\mathcal{O}(d^{0.61})$ & $\mathcal{O}(d^{0.64})$ & Lighter \\
        \hline
    \end{tabular}
    \caption{Empirical scaling of the Metropolized PDMP sampler with respect to dimension $d$.}
    \label{tab:empirical_scaling}
\end{table}

\paragraph{Results}
The scaling results are summarized in Table \ref{tab:empirical_scaling}. We present the detailed scaling analyses grouped by tail behavior: Figure \ref{fig:scaling_gaussian_tails} illustrates the performance on targets with Gaussian tails, while Figure \ref{fig:scaling_nongaussian_tails} details the results for targets with non-Gaussian tails. The empirical scaling for the number of events for Gaussian-tailed targets lies between $\mathcal{O}(d^{0.22})$ and $\mathcal{O}(d^{0.29})$, close to the theoretical $\mathcal{O}(d^{0.25})$ scaling of standard HMC, and overcoming the $\mathcal{O}(d^{0.5})$ limit established for traditional PDMP algorithms.

The number of gradient evaluations scales slightly less favorably, ranging between $\mathcal{O}(d^{0.26})$ and $\mathcal{O}(d^{0.31})$. This marginal degradation is not unexpected, as the numerical approximation of the event rates can become more challenging in higher dimensions. However, it may also stem from implementation overhead or limitations in the current numerical approximation scheme, suggesting that employing higher-order methods could further optimize performance. Overall, the excess complexity is well-contained, adding only $0.02$ to $0.05$ to the scaling exponent. Finally, the algorithm demonstrates robustness, maintaining reasonable scaling behavior when for both the light-tailed Quartic distribution and the heavier-tailed Logistic distribution.

\begin{figure}[htbp]
    \centering
    \includegraphics[width=\textwidth]{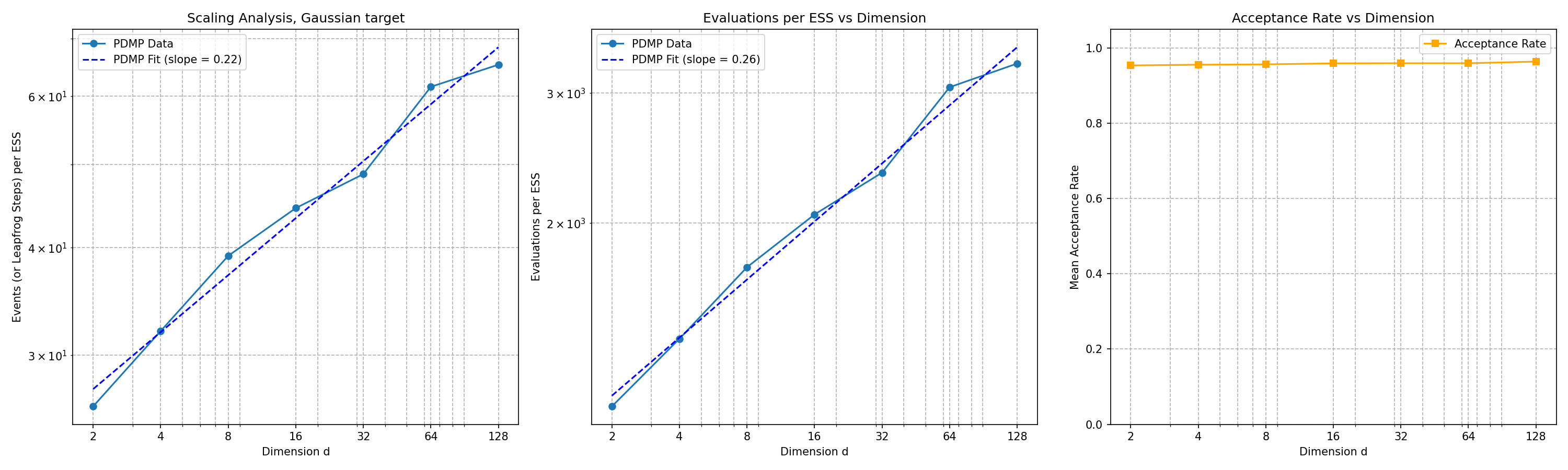} \\[1em]
    \includegraphics[width=\textwidth]{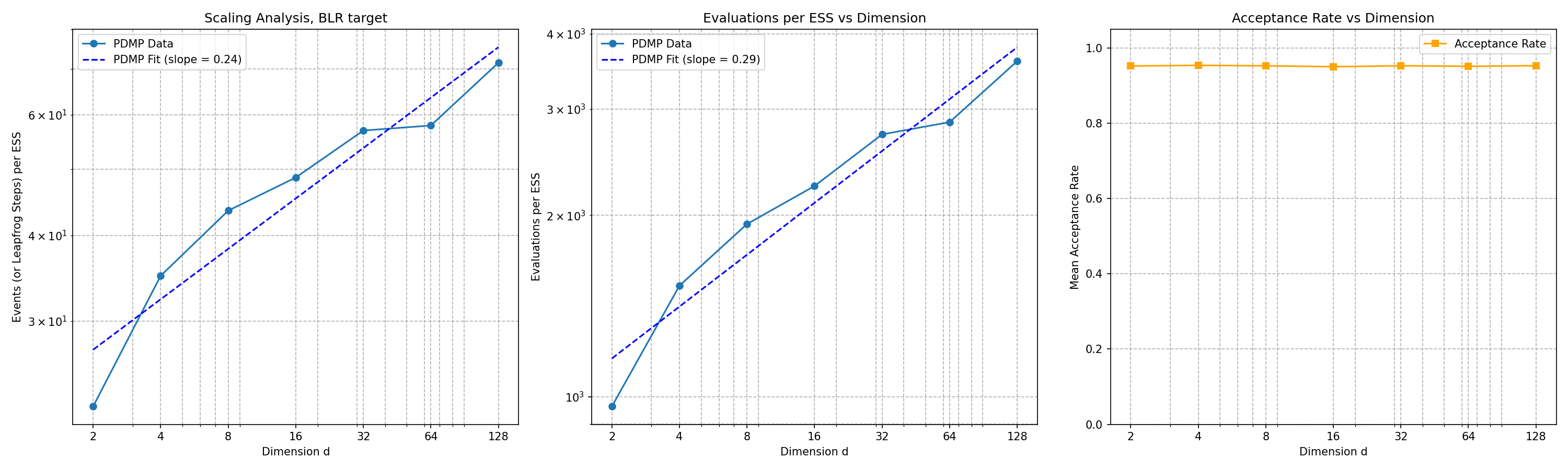} \\[1em]
    \includegraphics[width=\textwidth]{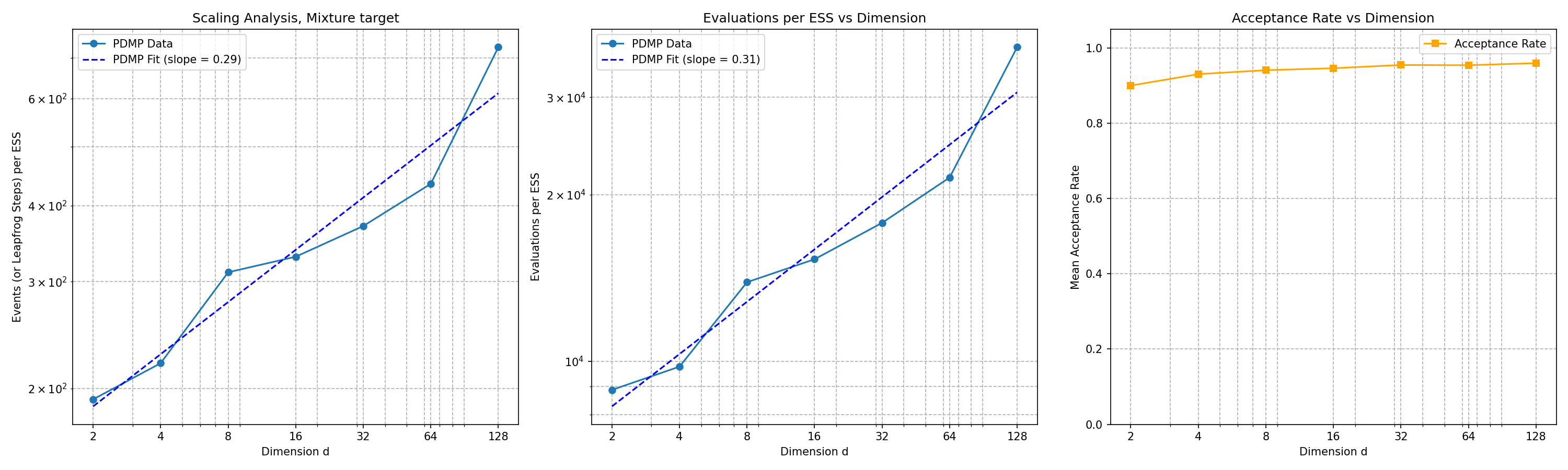}
    \caption{Scaling analysis for baseline targets with \textbf{Gaussian tails}. Top: Standard Gaussian; Middle: Bayesian Logistic Regression (BLR); Bottom: Gaussian Mixture. Each sub-figure displays Events per ESS, Number of Gradient Evaluations per ESS, and Mean Acceptance Rate.}
    \label{fig:scaling_gaussian_tails}
\end{figure}

\begin{figure}[htbp]
    \centering
    \includegraphics[width=\textwidth]{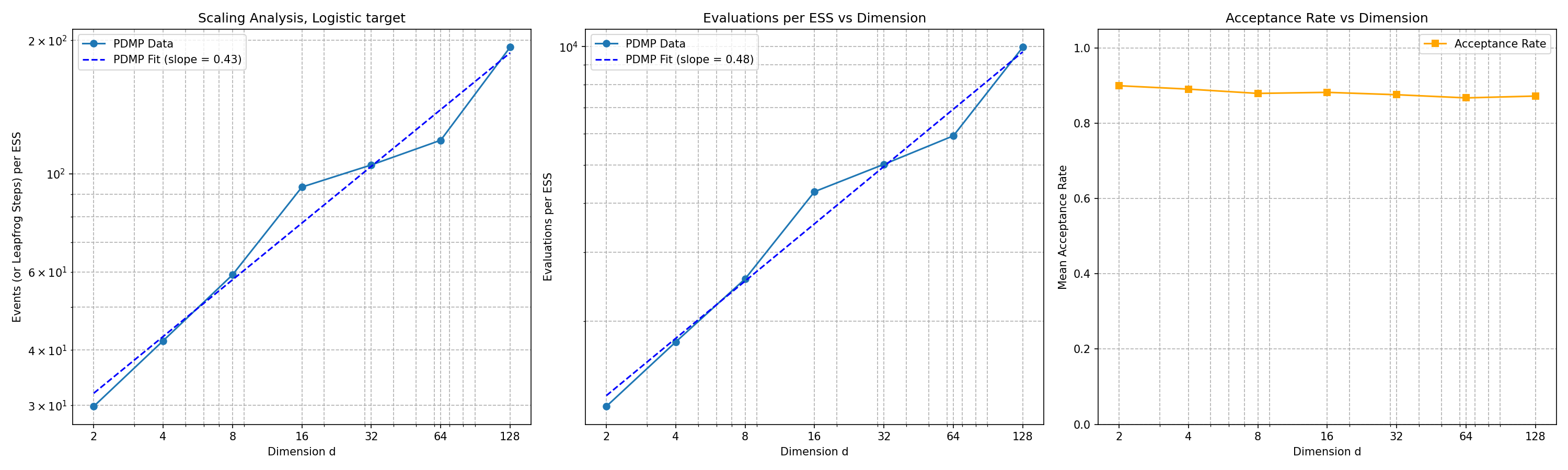} \\[1em]
    \includegraphics[width=\textwidth]{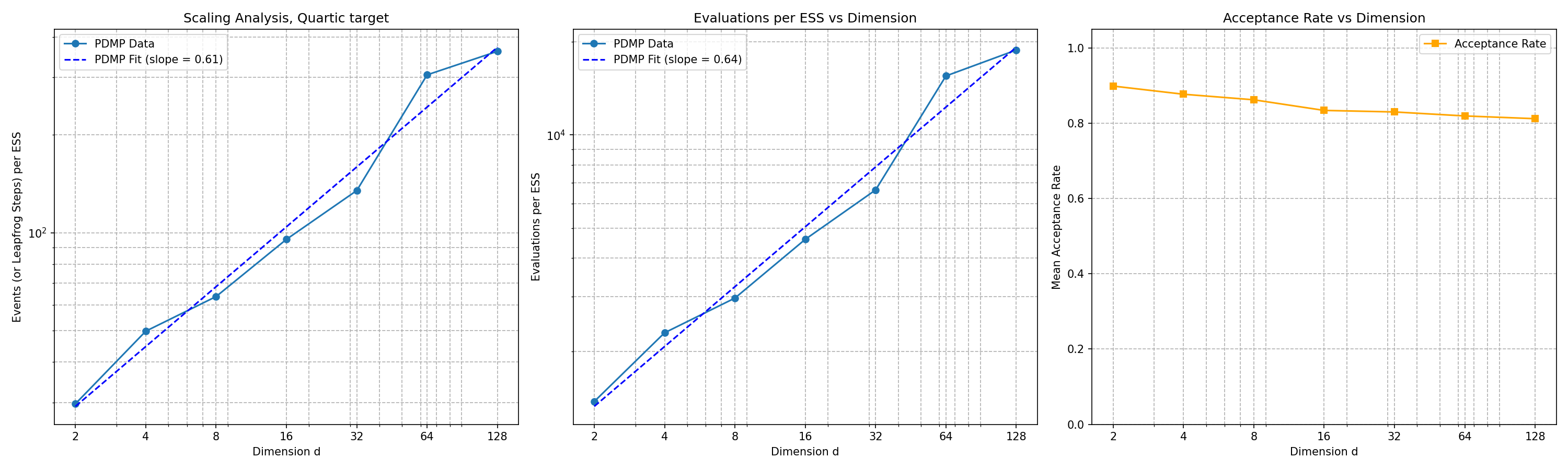}
    \caption{Robustness testing on targets with \textbf{Non-Gaussian tails}. Top: Logistic target (heavier tails); Bottom: Quartic target (lighter tails). Each sub-figure displays Events per ESS, Number of GradientEvaluations per ESS, and Mean Acceptance Rate.}
    \label{fig:scaling_nongaussian_tails}
\end{figure}

\section{Discussion}
\label{sec:discussion}

In this work, we identified and overcame a fundamental scaling limit of Piecewise Deterministic Markov Processes (PDMPs). First, we established an $\Omega(d^{1/2})$ lower bound on the computational complexity of standard PDMPs, showing that the requirement to maintain target invariance at all continuous times strictly limits high-dimensional performance. To explicitly break this barrier, we proposed relaxing the continuous-time invariance requirement. By introducing a path re-weighting strategy, our algorithm successfully bypasses the theoretical lower bound, empirically achieving an $\mathcal{O}(d^\alpha)$ scaling with $\alpha \in [0.2, 0.3]$ for Gaussian-tailed targets.

A key practical advantage of this algorithm is its local adaptivity. In addition to the local adaptation of the trajectory length using the No-U-Turn criterion, the PDMP framework allows the distance between velocity updates to adjust to the local geometry of the target distribution during sampling. While implementing locally adaptive step sizes in standard HMC is challenging due to the need to maintain reversibility, our non-reversible PDMP approach avoids these constraints entirely.

There are several natural directions for future work. First, while we proved the $\mathcal{O}(d^{1/2})$ lower bound under specific assumptions, formally extending this to all local PDMP algorithms remains an open problem. Second, our empirical results suggest a scaling between $\mathcal{O}(d^{1/4})$ and $\mathcal{O}(d^{1/3})$ for Gaussian-tailed targets, but establishing a rigorous theoretical complexity is still needed. The question of optimal tuning also requires further investigation, particularly regarding how to determine the optimal acceptance rate and, consequently, the appropriate numerical tolerance for the approximated event rates.

Finally, there is substantial room for algorithmic exploration within the proposed framework. For instance, the probability density weights, currently based on the absolute difference of Hamiltonians, could be modified to use alternatives like a squared distance penalty. 
%To go further, because our path re-weighting strategy corrects the target distribution, the underlying dynamic does not even require a strict invariant distribution; it only needs to oscillate appropriately around the initial Hamiltonian. 
Additionally, while our algorithm is inspired by Hamiltonian Monte Carlo (HMC) dynamics, future research could explore dynamics different from HMC.
Ultimately, by introducing these new ideas for sampling algorithms, we hope to open the door to the development of new samplers, in particular locally adaptive ones.

\section*{Acknowledgments}

I would like to thank Matthew Sutton, Sam Power and Nicolas Chevallier for their insightful discussions on this work.

\bibliographystyle{plain}
\bibliography{main}

\appendix
\section{Deferred Proofs}
\label{app:proofs}

\subsection{Complexity lower bound}

\subsection{Detailed proof of Proposition \ref{prop:complexity_lower_bound}}
\label{app:proofs_proposition_complexity_bound}

We start by two useful lemmas before the proof of the proposition itself.

\begin{lemma}[Lower Bound on Expected Distance for i.i.d. targets]
    \label{lem:distance_lower}
    Let $X, Y \in \mathbb{R}^d$ be independent random vectors drawn from a product measure $\pi = \pi_1^{\otimes d}$. Assume $\pi_1$ has finite variance and is not a Dirac mass. Then, there exists a constant $C > 0$ such that $\mathbb{E}_\pi[\|X - Y\|_2] \ge C \sqrt{d}$.
\end{lemma}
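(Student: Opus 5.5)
We want a lower bound $\mathbb{E}[\|X-Y\|_2]\ge C\sqrt d$. The approach is to control the square root of a sum of i.i.d. nonnegative terms from below. Write $D_i = (X_i - Y_i)^2$. The $D_i$ are i.i.d., nonnegative, with mean
\[
m := \mathbb{E}[D_i] = 2\,\mathrm{Var}_{\pi_1} > 0 .
\]
The mean is finite because $\pi_1$ has finite variance, and strictly positive because $\pi_1$ is not a Dirac mass. Then $\|X-Y\|_2 = \sqrt{S_d}$ with $S_d = \sum_{i=1}^d D_i$ and $\mathbb{E}[S_d] = dm$. We only assume a finite second moment of $\pi_1$, so $D_i$ may lack a finite variance. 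We therefore avoid Chebyshev on $S_d$ and use a truncation argument.

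\textbf{Truncation.} Fix $M>0$ large enough that $m_M := \mathbb{E}[\min(D_i, M)] \ge m/2$. This is possible by monotone convergence, since $m>0$. Set $\bar D_i = \min(D_i,M)$ and $\bar S_d = \sum_i \bar D_i \le S_d$. The truncated variables are bounded, so $\mathrm{Var}(\bar S_d) \le d M^2$, and $\mathbb{E}[\bar S_d] = d m_M \ge dm/2$. Chebyshev then gives
\[
\mathbb{P}\big(\bar S_d < d m_M/2\big) \le \frac{4 d M^2}{d^2 m_M^2} = \frac{4M^2}{d\, m_M^2}.
\]
This tends to $0$. So there is a $d_0$ such that for $d\ge d_0$, with probability at least $1/2$, we have $S_d \ge \bar S_d \ge dm/4$.

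\textbf{Lower bound for large $d$.} On that event $\sqrt{S_d} \ge \sqrt{dm}/2$. Hence for all $d \ge d_0$,
\[
\mathbb{E}\|X-Y\|_2 \ge \tfrac14\sqrt{m}\,\sqrt d .
\]

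\textbf{Small $d$.} For $d<d_0$, use $\|X-Y\|_2 \ge |X_1-Y_1|$. Then $\mathbb{E}\|X-Y\|_2 \ge c_1 := \mathbb{E}|X_1-Y_1|$, and $c_1>0$ since $\pi_1$ is non-degenerate. It follows that $\mathbb{E}\|X-Y\|_2 \ge (c_1/\sqrt{d_0})\sqrt d$ for $d<d_0$.

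Taking $C = \min(\sqrt m/4,\; c_1/\sqrt{d_0})$ gives the claim for all $d$.

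A slicker alternative avoids truncation altogether. For a single coordinate, $\|X-Y\|_2 \ge d^{-1/2}\|X-Y\|_1$ fails in the wrong direction, so it does not help. Instead, use the pointwise inequality $\sqrt{s} \ge s/\sqrt{t}$ for $s\le t$, applied on the event $\{\bar S_d \le 2dm_M\}$. This event is again controlled by Chebyshev. Both routes need the same bounded-variable concentration.

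\textbf{Main obstacle.} The only delicate point is that finite variance of $\pi_1$ does not give finite variance of $D_i$, which is why we truncate. Everything else is routine. The constant $C$ depends only on $\pi_1$ (through $m$, $M$, $d_0$, $c_1$), as Proposition~\ref{prop:complexity_lower_bound} requires.
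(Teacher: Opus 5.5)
Your main argument is correct. Truncating $D_i=(X_i-Y_i)^2$ at level $M$ sidesteps the possibly infinite fourth moment. Chebyshev on the bounded sum $\bar S_d$ then gives $\mathbb{E}\|X-Y\|_2\ge\tfrac14\sqrt{m}\,\sqrt d$ for $d\ge d_0$, and $\|X-Y\|_2\ge|X_1-Y_1|$ handles $d<d_0$.

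The paper's proof is much shorter, and it uses exactly the inequality you dismissed. The bound $\|v\|_2\ge d^{-1/2}\|v\|_1$ holds for every $v\in\mathbb{R}^d$; it is Cauchy--Schwarz, $\sum_i|v_i|\cdot 1\le\sqrt d\,\|v\|_2$. So your remark that it ``fails in the wrong direction'' is mistaken. Apply it to $v=X-Y$, take expectations, and use that the pairs $(X_i,Y_i)$ are i.i.d. This gives $\mathbb{E}\|X-Y\|_2\ge d^{-1/2}\sum_{i=1}^d\mathbb{E}|X_i-Y_i|=\sqrt d\,\mathbb{E}|X_1-Y_1|$ for every $d$ at once, i.e.\ $C=c_1$ in your notation. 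No concentration, truncation or case split is needed, and finite variance is not used. The ``delicate point'' you flag therefore never arises.

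What your route buys is a sharper constant for large $d$. Concentration of $S_d/d$ around $m=2\,\mathrm{Var}(\pi_1)$ gives a constant of order $\sqrt m$, which matches the true asymptotic rate up to your factor $1/4$. By contrast, $c_1\le\sqrt m$ by Jensen, and $c_1$ can be much smaller. As a uniform-in-$d$ constant, however, your $\min(\sqrt m/4,\,c_1/\sqrt{d_0})$ is strictly smaller than the paper's $c_1$, since your Chebyshev step forces $d_0\ge 8$. Proposition~\ref{prop:complexity_lower_bound} only needs some $C>0$ depending on $\pi_1$, so the one-line argument is the natural choice.
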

\begin{proof}
    Applying the standard norm inequality $\|v\|_2 \ge d^{-1/2} \|v\|_1$ to the vector $X - Y$, and taking the expectation, we obtain:
    \[
        \mathbb{E}[\|X - Y\|_2] \ge \frac{1}{\sqrt{d}} \sum_{i=1}^d \mathbb{E}[|X_i - Y_i|].
    \]
    Because $X$ and $Y$ are independently drawn from $\pi_1^{\otimes d}$, the pairs $(X_i, Y_i)$ are i.i.d., which simplifies the sum to:
    \[
        \mathbb{E}[\|X - Y\|_2] \ge \sqrt{d} \, \mathbb{E}[|X_1 - Y_1|].
    \]
    Setting $C = \mathbb{E}[|X_1 - Y_1|]$, we observe that $C > 0$ strictly, as $X_1, Y_1$ are independent and $\pi_1$ is not a Dirac mass.
\end{proof}

\begin{lemma}[Number of events]
    \label{lemma:number_of_events}
    Let $(X_t)_{t \geq 0}$ be a PDMP satisfying assumption \ref{assum:ergodicity} (Ergodicity), and let $N(t)$ be the number of jumps in the time interval $(0, t]$. Assuming that $\mathbb{E}_{\mu}[\lambda] > 0$, then
    \[
    \lim_{t \to \infty} \frac{N(t)}{t} = \mathbb{E}_{\mu}[\lambda] \quad a.s.
    \]
\end{lemma}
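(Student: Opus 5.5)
The plan is to realize $N(t)$ through its compensator and then apply Assumption \ref{assum:ergodicity} to the compensator. For a PDMP with rate $\lambda$, the standard construction (Davis) gives the process
\[
    M(t) = N(t) - \Lambda(t), \qquad \Lambda(t) := \int_0^t \lambda(X_s, Z_s)\, ds,
\]
which is a local martingale with respect to the natural filtration. Equivalently, through a random time change, $N(t) = \tilde N(\Lambda(t))$ where $\tilde N$ is a unit-rate Poisson process. The time-change formulation is the one I would use, because it avoids any integrability issue for $M$.

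\textbf{Step 1: the compensator.} Apply Assumption \ref{assum:ergodicity} with $f = \lambda$, which is assumed $\mu$-integrable since $\mathbb{E}_\mu[\lambda]$ is finite. This gives
\[
    \frac{\Lambda(t)}{t} \to \mathbb{E}_\mu[\lambda] \quad \text{a.s.}
\]
Because $\mathbb{E}_\mu[\lambda] > 0$, it follows that $\Lambda(t) \to \infty$ almost surely.

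\textbf{Step 2: the time-changed Poisson process.} By the strong law of large numbers for a unit-rate Poisson process, $\tilde N(s)/s \to 1$ almost surely as $s \to \infty$. Composing with $\Lambda(t) \to \infty$, I would write
\[
    \frac{N(t)}{t} = \frac{\tilde N(\Lambda(t))}{\Lambda(t)} \cdot \frac{\Lambda(t)}{t},
\]
and take the product of the two almost sure limits to obtain the claim.

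\textbf{The main obstacle} is justifying the time-change representation, specifically that $\tilde N$ is a unit-rate Poisson process independent of the time scale along which it is evaluated. This is standard for PDMPs built by the Davis construction: jumps are generated by thresholds $-\log U_k$ on the integrated rate. Concretely, the $k$-th jump occurs when $\Lambda$ accumulated since the previous jump reaches an $\mathrm{Exp}(1)$ variable $E_k$, and the $E_k$ are i.i.d. Hence $\tilde N(s) = \#\{k : E_1 + \dots + E_k \le s\}$ exactly, and the representation holds pathwise.

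\textbf{Fallback.} If one prefers the martingale route instead, it remains to show $M(t)/t \to 0$. The predictable quadratic variation of $M$ is $\langle M\rangle_t = \Lambda(t)$, so a strong law for locally square-integrable martingales (such as Lépingle's) gives $M(t)/\Lambda(t) \to 0$ on $\{\Lambda(\infty) = \infty\}$. Together with Step 1, this yields the same conclusion.
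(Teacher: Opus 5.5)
Your proposal is correct, and your fallback is exactly the paper's proof. The paper invokes Davis's Proposition 26.6 to get the local martingale $M(t)=N(t)-\Lambda(t)$ with $\langle M\rangle_t=\Lambda(t)$. It uses ergodicity to get $\Lambda(t)/t\to\mathbb{E}_\mu[\lambda]>0$, and then applies the Liptser--Shiryaev martingale strong law to conclude $N(t)/\Lambda(t)\to 1$.

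Your main route is genuinely different: it exploits the threshold construction of the PDMP. The integrated rate between consecutive jumps equals i.i.d.\ $\mathrm{Exp}(1)$ variables $E_k$. Because $T_k$ is the \emph{first} time the integrated rate since $T_{k-1}$ reaches $E_k$, one has $T_k\le t \iff E_1+\dots+E_k\le\Lambda(t)$. Hence $N(t)=\tilde N(\Lambda(t))$ holds path by path, and the conclusion reduces to the classical strong law for i.i.d.\ exponentials.

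What this buys you is an elementary argument: no martingale theory, no localization or square-integrability of $M$, and no need to identify $\langle M\rangle$. The price is committing to the canonical Davis construction, which is harmless because the claim only concerns the law of the process. The paper's martingale route, by contrast, is construction-free. It applies to any counting process whose compensator is $\int_0^t\lambda(X_s,Z_s)\,ds$, for instance a realization by thinning with extra randomization. You could make the time-change route equally general via Meyer's (Papangelou's) time-change theorem, but that is a heavier tool than the one the paper uses.

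One small correction to your ``main obstacle'' paragraph: you ask for $\tilde N$ to be independent of the time scale at which it is evaluated. That independence is false, since $\Lambda$ depends on the trajectory, hence on the jump times, hence on the $E_k$. It is also not needed. On the full-measure event where $(E_1+\dots+E_k)/k\to 1$ and $\Lambda(t)\to\infty$, the limit $\tilde N(\Lambda(t))/\Lambda(t)\to 1$ holds sample path by sample path, and that is all your Step 2 actually uses.
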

\begin{proof}
    Let $\Lambda(t) = \int_0^t \lambda(X_s,Z_s) ds$ be the compensator of the counting process $N(t)$. By proposition 26.6 from \cite{davis1993markov}, the compensated process 
    \[
    M(t) = N(t) - \Lambda(t)
    \]
    is a local martingale. Furthermore, its predictable quadratic variation is exactly its compensator, $\langle M \rangle_t = \Lambda(t)$.
    
    By the ergodicity of $(X_t,Z_t)_{t \geq 0}$ with invariant probability measure $\mu$, the time average of the state-dependent jump intensity converges almost surely to its spatial average:
    \[
    \lim_{t \to \infty} \frac{\langle M \rangle_t}{t} = \lim_{t \to \infty} \frac{\Lambda(t)}{t} = \mathbb{E}_{\mu}[\lambda] \quad a.s.
    \]
    
    Since $\mathbb{E}_{\mu}[\lambda] > 0$, it follows that $\langle M \rangle_t \to \infty$ almost surely as $t \to \infty$. 
    
    Applying the Strong Law of Large Numbers for martingales (see, e.g., Liptser \& Shiryaev, \emph{Theory of Martingales}, Corollary 1.1, p. 144 \cite{liptser1989theory}), we have:
    \[
    \lim_{t \to \infty} \frac{M(t)}{\langle M \rangle_t} = 0 \quad a.s.
    \]
    Substituting $M(t) = N(t) - \Lambda(t)$ and $\langle M \rangle_t = \Lambda(t)$ into the limit yields:
    \[
    \lim_{t \to \infty} \frac{N(t) - \Lambda(t)}{\Lambda(t)} = 0 \implies \lim_{t \to \infty} \frac{N(t)}{\Lambda(t)} = 1 \quad a.s.
    \]
    
    Finally, decomposing the asymptotic jump rate gives:
    \[
    \lim_{t \to \infty} \frac{N(t)}{t} = \left( \lim_{t \to \infty} \frac{N(t)}{\Lambda(t)} \right) \left( \lim_{t \to \infty} \frac{\Lambda(t)}{t} \right) = 1 \cdot \mathbb{E}_{\mu}[\lambda] = \mathbb{E}_{\mu}[\lambda] \quad a.s.
    \]
    This concludes the proof.
\end{proof}

\paragraph{Proof of the proposition}
\begin{proof}
    Consider a single, infinitely long trajectory of the PDMP, $(X_t, Z_t)_{t \geq 0}$, and let $(t_k)_{k \geq 1}$ be the sequence of random times given by Assumption \ref{assum:independent_times}. We first fix the dimension $d$ and evaluate the process up to a macroscopically large stopping time $T = t_K$.
    
    Because the spatial trajectory $t \mapsto X_t$ is absolutely continuous, the total arc length $L(t_K)$ traveled by the process in $\mathbb{R}^d$ up to time $t_K$ is exactly the integral of the speed:
    \[
        L(t_K) = \int_0^{t_K} \|\phi_x(X_s, Z_s)\| ds
    \]
    By the triangle inequality, the path length traveled between time $t_k$ and $t_{k+1}$ is bounded below by the Euclidean distance between the two endpoints. Summing these bounds over all intervals up to $t_K$ gives:
    \[
        \int_0^{t_K} \|\phi_x(X_s, Z_s)\| ds = \sum_{k=1}^{K-1} \int_{t_k}^{t_{k+1}} \|\phi_x(X_s, Z_s)\| ds \geq \sum_{k=1}^{K-1} \|X_{t_{k+1}} - X_{t_k}\|
    \]
    Dividing both sides by the total elapsed time $t_K$, we obtain:
    \[
        \frac{1}{t_K} \int_0^{t_K} \|\phi_x(X_s, Z_s)\| ds \geq \left( \frac{K}{t_K} \right) \frac{1}{K} \sum_{k=1}^{K-1} \|X_{t_{k+1}} - X_{t_k}\|
    \]
    We now take the limit as $K \to \infty$, which implies $t_K \to \infty$ almost surely. By Assumption \ref{assum:ergodicity}, the left-hand side converges almost surely to $\mathbb{E}_\mu[\|\phi_x\|]$. By Assumption \ref{assum:independent_times}, the empirical average on the right-hand side converges almost surely to $\mathcal{D}_d = \mathbb{E}_{\pi \otimes \pi}[\|X - Y\|]$. Thus, passing to the limit yields:
    \[
        \mathbb{E}_\mu[\|\phi_x\|] \geq \left( \limsup_{K \to \infty} \frac{K}{t_K} \right) \mathcal{D}_d
    \]
    Rearranging this inequality establishes a lower bound on the asymptotic time required per independent sample:
    \[
        \liminf_{K \to \infty} \frac{t_K}{K} \geq \frac{\mathcal{D}_d}{\mathbb{E}_\mu[\|\phi_x\|]}
    \]
    We now bound the sampling complexity $C_d$, which is the limit of the number of jumps $N(t_K)$ per independent sample $K$. By writing $C_d$ as the product of the jump rate and the time per sample, we apply lemma \ref{lemma:number_of_events}:
    \[
        C_d = \lim_{K \to \infty} \frac{N(t_K)}{K} = \lim_{K \to \infty} \left( \frac{N(t_K)}{t_K} \cdot \frac{t_K}{K} \right) \geq \mathbb{E}_\mu[\lambda] \frac{\mathcal{D}_d}{\mathbb{E}_\mu[\|\phi_x\|]}
    \]
    Finally, by Lemma \ref{lem:distance_lower}, we have $\mathcal{D}_d \geq C \sqrt{d}$. Substituting this lower bound into our inequality yields:
    \[
        C_d \geq \frac{C \sqrt{d}}{\mathbb{E}_\mu[\|\phi_x\|]} \mathbb{E}_\mu[\lambda]
    \]
    This concludes the proof.
\end{proof}

\subsubsection{Proof of Lemma \ref{lemma:zero-mean-div}}
\begin{proof}[Proof of Lemma \ref{lemma:zero-mean-div}]
    
    The infinitesimal generator $\mathcal{A}$ of the PDMP acts on a suitable test function $f$ as follows:
    \[
        \mathcal{A}f(x) = \langle \phi(x), \nabla f(x) \rangle + \lambda(x) \int_E (f(y) - f(x)) Q(x, dy).
    \]
    The adjoint evaluated at $\mu$ is given by
    \[
        \mathcal{A}^*\mu(x) = -\text{div}(\phi(x)\mu(x)) + \int_E \lambda(y) \mu(y) Q(y, dx) - \lambda(x)\mu(x).
    \]
    Since $\mu$ is an invariant probability distribution, the stationary equation dictates that the adjoint of the generator applied to $\mu$ must equal zero: $\mathcal{A}^*\mu = 0$. 
        
    We integrate the stationary equation $\mathcal{A}^*\mu(x) = 0$ over the entire state space $E$ with respect to the Lebesgue measure $dx$:
    \[
        -\int_E \text{div}(\phi(x)\mu(x)) dx + \int_E \left( \int_E \lambda(y) \mu(y) Q(y, dx) \right) - \int_E \lambda(x)\mu(x) dx = 0
    \]
    By applying Fubini's theorem to the double integral, we can integrate with respect to $dx$ first. Because $Q(y, \cdot)$ is a probability transition kernel, we know that $\int_E Q(y, dx) = 1$ for any $y$. Thus, the incoming jump term simplifies to:
    \[
        \int_E \lambda(y) \mu(y) \left( \int_E Q(y, dx) \right) dy = \int_E \lambda(y) \mu(y) dy
    \]
    Notice that this exactly mirrors the outgoing jump term $\int_E \lambda(x)\mu(x) dx$ (up to a dummy variable change). Therefore, the total jump terms cancel each other out perfectly:
    \[
        -\int_E \text{div}(\phi(x)\mu(x)) dx + \int_E \lambda(y) \mu(y) dy - \int_E \lambda(x)\mu(x) dx = 0
    \]
    \[
        \int_E \text{div}(\phi(x)\mu(x)) dx = 0
    \]
    Finally, we can rewrite this integral as an expectation with respect to the measure $\mu(dx) = \mu(x) dx$ by multiplying and dividing by $\mu(x)$ inside the integral:
    \[
        \int_E \frac{\text{div}(\phi(x)\mu(x))}{\mu(x)} \mu(x) dx = E_\mu\left[\frac{\text{div}(\phi \mu)}{\mu}\right] = 0
    \]
    This completes the proof.
\end{proof}

\begin{comment}
\subsection{Proof of Proposition \ref{prop:complexity}}
\begin{proof}[Proof of Proposition \ref{prop:complexity}]
    Let $T$ be the expected continuous time required for the process to decorrelate (i.e., to generate a new independent sample). By the ergodic theorem, the time-averaged jump rate along the process trajectory converges to the spatial expectation $E_\mu[\lambda]$. Thus, the expected number of jump events that occur during this decorrelation time $T$ is:
    \[
        C = T \cdot E_\mu[\lambda].
    \]
    
    By definition, the average distance the process travels in the target space $\mathbb{R}^d$ during this time $T$ is $\tau(\mu)$. Similarly invoking the ergodic theorem, the expected speed of the deterministic flow in $\mathbb{R}^d$ is given by $E_\mu[\|\phi_x\|]$. The expected time $T$ can therefore be expressed as the distance divided by the average speed:
    \[
        T = \frac{\tau(\mu)}{E_\mu[\|\phi_x\|]}.
    \]
    
    Substituting this expression for $T$ into the equation for $C$ directly yields the result:
    \[
        C = \frac{\tau(\mu)}{E_\mu[\|\phi_x\|]} E_\mu[\lambda].
    \]
\end{proof}
\end{comment}

\subsection{Algorithm}

\subsubsection{Proof of proposition \ref{prop:w_t}}
\begin{proof}
    The probability density $w_t$ relies on evaluating $P_{H_{Y_t}}(Y)$, which can be decomposed into a continuous boundary integral and a product over discrete event rates.
    
    \paragraph{Continuous Boundary Integral.}
    Using the identity $\max(0, x) = \frac{1}{2}(x + |x|)$, the continuous integral over the path rates splits into a symmetric total variation term and an exact boundary difference. Because the trajectory $Y$ is generated in two opposite directions originating from $y = Y_t$, we integrate the forward process to $H_{end}$ and the backward process to $H_{start}$:
    \begin{align*}
        \int_{0}^{T_{fwd}} \lambda_{H_y}^{fwd}(s) ds + \int_{0}^{T_{bwd}} \lambda_{H_y}^{bwd}(u) du 
        &= \frac{1}{4\sigma} V(Y) + \frac{1}{4\sigma} \int_{H_y}^{H_{end}} \text{sgn}(H - H_y) dH + \frac{1}{4\sigma} \int_{H_y}^{H_{start}} \text{sgn}(H - H_y) dH \\
        &= \frac{1}{4\sigma} V(Y) + \frac{|H_{end} - H_y|}{4\sigma} + \frac{|H_{start} - H_y|}{4\sigma}.
    \end{align*}
    Since the total variation $V(Y)$ is a property of the full path and is invariant to the starting point $y$, the probability contribution from the continuous flows is exactly:
    \begin{equation}
        \exp\left(-\int_0^{T_{fwd}} \lambda_{H_y}^{fwd}(s) ds - \int_0^{T_{bwd}} \lambda_{H_y}^{bwd}(u) du\right) \propto \exp\left( - \frac{|H_{end} - H_y| + |H_{start} - H_y|}{4\sigma} \right).
    \end{equation}

    \paragraph{Discrete Event Rates and Energy Bounds.}
    The product of the jump rates at the event times $\tau_k$ restricts the domain of valid starting points. Let $\dot{H}(\tau_k)$ be the physical time derivative of the Hamiltonian at event time $\tau_k$. 
    
    For a forward-generated event ($k \in K_{fwd}$), the rate evaluates to $\lambda_{H_y}^{fwd}(\tau_k) = \max\left(0, \frac{1}{2\sigma}\text{sgn}(H(\tau_k) - H_y)\dot{H}(\tau_k)\right)$. This rate is strictly positive if and only if $(H(\tau_k) - H_y)$ and $\dot{H}(\tau_k)$ share the same sign.
    \begin{itemize}
        \item If $\dot{H}(\tau_k) > 0$, we require $H(\tau_k) - H_y > 0$, hence $H_y < H(\tau_k)$.
        \item If $\dot{H}(\tau_k) < 0$, we require $H(\tau_k) - H_y < 0$, hence $H_y > H(\tau_k)$.
    \end{itemize}

    For a backward-generated event ($k \in K_{bwd}$), the process is simulated backwards in time, effectively reversing the derivative to $-\dot{H}(\tau_k)$. The rate evaluates to $\lambda_{H_y}^{bwd}(\tau_k) = \max\left(0, \frac{1}{2\sigma}\text{sgn}(H(\tau_k) - H_y)(-\dot{H}(\tau_k))\right)$. This is strictly positive if and only if $(H(\tau_k) - H_y)$ and $-\dot{H}(\tau_k)$ share the same sign.
    \begin{itemize}
        \item If $\dot{H}(\tau_k) > 0$, we require $H(\tau_k) - H_y < 0$, hence $H_y > H(\tau_k)$.
        \item If $\dot{H}(\tau_k) < 0$, we require $H(\tau_k) - H_y > 0$, hence $H_y < H(\tau_k)$.
    \end{itemize}

    Because the magnitude of the jump rate evaluates strictly to $\frac{1}{2\sigma}|\dot{H}(\tau_k)|$ regardless of $H_y$ whenever it is non-zero, the product over all jumps acts purely as a binary indicator function. Combining the constraints above across all events yields the global upper bound $H_{max}$ and lower bound $H_{min}$ on $H_y$:
    \begin{equation}
        \left(\prod_{k \in K_{fwd}} \lambda_{H_y}^{fwd}(\tau_k)\right) \left(\prod_{k \in K_{bwd}} \lambda_{H_y}^{bwd}(\tau_k)\right) \propto \mathbb{I}\big(H_{min} < H(y) < H_{max}\big).
    \end{equation}
    Multiplying this indicator by the continuous thermodynamic factor and the prior $\mu(Y_t)$ recovers the exact expression for $w_t$, concluding the proof.
\end{proof}

\subsection{Proof of Proposition \ref{prop:scale-invariance}}
\begin{proof}
    Consider the rescaled variables $\tilde{x}(t) = \frac{1}{\alpha} x(\alpha t)$ and $\tilde{v}(t) = v(\alpha t)$, and the rescaled potential $\tilde{U}(\tilde{x}) = U(\alpha \tilde{x})$. Under Flow 1, $\frac{d\tilde{x}}{dt} = \frac{1}{\alpha} \alpha \dot{x}(\alpha t) = \tilde{v}(t)$, matching the required dynamics. Under Flow 2, $\frac{d\tilde{v}}{dt} = \alpha \dot{v}(\alpha t) = -\alpha \nabla U(x(\alpha t)) = -\nabla \tilde{U}(\tilde{x}(t))$. 
    
    The Hamiltonian evaluated on the rescaled trajectory satisfies $\tilde{H}(\tilde{x}(t), \tilde{v}(t)) = U(\alpha \tilde{x}(t)) + \frac{1}{2} \|\tilde{v}(t)\|^2 = H(x(\alpha t), v(\alpha t))$. The time derivative of the Hamiltonian is therefore $\frac{d\tilde{H}}{dt}(t) = \alpha \dot{H}(\alpha t)$. 
    
    Consequently, the jump rate evaluated along the rescaled trajectory is $\tilde{\lambda}(t) = \alpha \lambda(\alpha t)$. The accumulated rate between any two scaled times $t_0$ and $t_1$ evaluates to $\int_{t_0}^{t_1} \alpha \lambda(\alpha s) ds = \int_{\alpha t_0}^{\alpha t_1} \lambda(u) du$, which is identical to the original accumulated rate. Thus, the sequence of jump events and exponential triggers precisely mirrors the original process. The algorithm simply runs $\alpha$ times faster in physical time while exploring the $\frac{1}{\alpha}$-scaled space, achieving exact scale invariance.
\end{proof}

\end{document}